\documentclass{hustmedia-paper}

\setcopyright{none}
\usepackage{booktabs}
\usepackage{amsmath}
\usepackage{mathtools}
\usepackage{amsthm}
\usepackage{enumitem}
\usepackage{multirow}
\usepackage{tabularx}
\usepackage{colortbl}

\newcommand{\dcmark}{\textcolor{green!50!black}{\ensuremath{\checkmark}}}
\newcommand{\dxmark}{\textcolor{red!75!black}{\ensuremath{\times}}}

\usepackage{algorithm}
\usepackage{algpseudocode}
\usepackage[table]{xcolor}
\definecolor{basepurple}{HTML}{bdb2ff}
\usepackage{xcolor}
\definecolor{CMTgreen}{RGB}{0,120,0}
\definecolor{lightgreen}{RGB}{220,255,220}
\definecolor{lightred}{RGB}{255,220,220}
\usepackage[normalem]{ulem}

\usepackage[capitalize,noabbrev]{cleveref}

\theoremstyle{plain}
\newtheorem{theorem}{Theorem}[section]
\newtheorem{proposition}[theorem]{Proposition}

\theoremstyle{definition}

\theoremstyle{remark}

\HMMAuthorLine{
Yunyao Zhang\textsuperscript{1,*} \textbullet\
Yihao Ai\textsuperscript{2,*} \textbullet\
Boqi Xu\textsuperscript{1} \textbullet\
Zuocheng Ying\textsuperscript{1} \textbullet\
Junqing Yu\textsuperscript{1} \textbullet\
Zikai Song\textsuperscript{1,\ensuremath{\dagger}}
}

\HMMAffiliationLine{
\textsuperscript{1}Huazhong University of Science and Technology;
\textsuperscript{2}National University of Singapore
}

\HMMAuthorNote{
\textsuperscript{*} Equal contribution.
\qquad
\textsuperscript{\ensuremath{\dagger}} Corresponding author:
\HMMEmail{skyesong@hust.edu.cn}.
}
\HMMPublicationStatus{Preprint}

\HMMCorrespondingEmail{skyesong@hust.edu.cn}

\HMMAbstractEmailOn

\HMMAuthorInformation{
Yunyao Zhang, Zuocheng Ying, Junqing Yu, Wei Yang, and Zikai Song ---
Huazhong University of Science and Technology, Wuhan, China;
Yihao Ai --- National University of Singapore, Singapore;
Qirui Mi --- Institute of Automation, Chinese Academy of Sciences, Beijing, China.
}

\HMMDisplayTitle{Rethinking Social Simulation in Open-Population Systems: From Persistent Agents to Communication Processes}

\HMMProjectWebsite{https://ai4ss.github.io/OpenSim/}

\title[OpenSim]{Rethinking Social Simulation in Open-Population Systems: From Persistent Agents to Communication Processes}

\author{Author One}
\affiliation{
  \institution{Huazhong University of Science and Technology}
  \city{Wuhan}
  \country{China}}
\email{author@example.com}

\author{Author Two}
\affiliation{
  \institution{Huazhong University of Science and Technology}
  \city{Wuhan}
  \country{China}}

\begin{abstract}
Large language models (LLMs) have enabled social simulation with persistent agents, where fixed computational individuals maintain internal states and interact over time.
However, large-scale social networks is inherently \textbf{open-population}: the participating population changes over time, users enter and leave dynamically, and only a sparse subset becomes observable through communication at any given time.
In such systems, persistent account identities do not provide continuously observable individual-state trajectories, making recursively maintained psychological states insufficiently grounded in empirical observations.
We therefore shift the simulation target from persistent individual trajectories to the observable evolution of social communication, modeling who participates, what they communicate, and how these communications shape subsequent system dynamics.
We instantiate this formulation with participant activation from a global user pool, collective state transition, and an LLM policy for communication generation, while explicitly separating environmental inputs from endogenous system evolution.
Across five Qwen backbones, our framework reduces average KL divergence by \textbf{71.2\%} over Direct LLM and \textbf{35.0\%} over MF-LLM, with boundary and counterfactual experiments further validating the proposed modeling boundaries.
Our results support communication-centered simulation as a
principled formulation for long-horizon open-population social systems.
\end{abstract}

\begin{document}
\maketitle

\section{Introduction}
Large-scale online platforms such as Twitter, Weibo, and live-streaming communities have become major environments for public opinion and collective response~\citep{Social-Simulation-overview-2014,social-network-analysis2004development}.
Large language models (LLMs) increasingly enable simulation of these systems through agents that generate socially situated actions and content~\citep{LLM-Agent-based-simulation-survey-2024large,zhang2026too,SIM2026}.
A common formulation represents people as persistent agents and carries the social process forward through their individual histories or latent states~\citep{Stanford-town-2023,Stanford1000agents-2024,Agentsociety-2025agentsociety}.
We call this a \emph{human-trajectory-centered} formulation.
Its empirical interpretation, however, remains unclear~\citep{CUP2023-intro1,CUP2023-intro2}.
\textit{If an LLM agent is intended to stand for a person, the validity of its trajectory appears to depend on an unverified equivalence between the language model and the human participant}.
If it is not, the empirical referent of the generated individual trajectory becomes difficult to specify.
The central question is therefore whether persistent individual trajectories are the appropriate recursive object for an open social process.

This question is particularly pressing on open online platforms.
For any given social event, the potential participant population is massive, while actual participation is sparse and time-varying~\citep{user-behavior-2013understanding,5rule-lupeng-2018exploring}.
Users may enter at any time and contribute only briefly before disengaging, even while remaining active elsewhere on the platform~\citep{szell2014contraction}.
Individuals are therefore observed intermittently within an event, whereas the platform process continues through posts, replies, and shares~\citep{twitter-2010,NC-Social-Media-2022universality}.
These \emph{communicative events} accumulate into a system-visible history that conditions subsequent activity.
Although a human-trajectory-centered formulation can be coherent for relatively closed populations with sustained participation, its interpretation becomes less clear in this open setting.
It leaves three basic scientific questions unresolved: \emph{what the simulator predicts}, \emph{what it is responsible for}, and \emph{what the LLM represents}~\citep{ecosystems-2005-Science,simulation-model-2003}.

\textbf{What the simulator predicts.}
A human-trajectory-centered formulation requires individual trajectories to be continued over time, yet in open online systems individuals are only observed when they act. The unobserved evolution between actions therefore has neither direct supervision nor an observable target for validation. This leaves the simulator with a dilemma: either generate latent individual evolution that cannot be empirically verified, or obtain future participants and behavior from the realized trajectory, thereby revealing part of the future. 

\textbf{What the simulator is responsible for.}
A human-trajectory-centered formulation treats future social change as the continued evolution of its modeled individuals. In an open system, however, future trajectories are shaped by contingencies that have not yet entered the social process. Absorbing these contingencies into individual trajectories collapses possible futures into the single realized continuation the simulator is expected to reproduce. It therefore becomes unclear whether the model is simulating possible social evolution or reconstructing the future that happened to occur.

\textbf{What the LLM represents.}
A human-trajectory-centered formulation requires some model to carry each individual's trajectory forward, and LLM-based simulators assign this role to the LLM. Yet the simulator typically observes only sparse digital traces of a person, such as profile attributes and past activity. These observations may support conditional generation of observable communicative acts, but they do not establish that the LLM recovers the person's psychological evolution. The resulting trajectory may serve as a computational state, but the available observations do not justify interpreting it as the recovered psychological evolution of the represented person.

These are not independent limitations. They arise from the same modeling decision: asking an unobserved individual trajectory to carry the continuity of an observable event history. We therefore shift the recursive object of simulation. \textit{Persons remain the participants through whom communicative events are realized, but their latent psychological trajectories are not treated as persistent system state}. Instead, given a system-visible history and declared environmental inputs, the simulator recursively updates that history and estimates a conditional distribution over subsequent observable record updates. Its inputs are observed or declared, its outputs are empirically addressable, and it is responsible for the generated platform process conditional on environmental inputs, not for inventing external events. The empirical claim therefore no longer depends on establishing equivalence between an LLM and a human mind. We arrive at this formulation from the computational requirements of open-population simulation. Conceptually, however, its central distinction echoes Luhmann's social systems theory: social processes reproduce through communication rather than through the continuous evolution of individual consciousness~\citep{luhmann-1995-socialsystem,esposito2022artificial}.

Our architecture realizes this contract through three components. We model \textbf{endogenous participation} over an open population, learn an \textbf{autonomous communication policy} that generates actions and content from information available at simulation time, and introduce \textbf{future-constrained communication learning} to connect locally plausible acts with subsequent process evolution. Future dynamics are used only as training supervision; at inference, the simulator generates and selects candidates without access to the realized future trajectory. These components recursively generate the open platform process rather than reconstructing presumed human trajectories.

Our contributions are threefold.
\begin{itemize}[leftmargin=1.4em, itemsep=0.25em, topsep=0.3em]
\item We identify a structural mismatch between human-trajectory-centered simulation and open online social systems: ambiguity about prediction, endogenous responsibility, and the interpretation of LLM agents arises from using unobserved individual trajectories to carry a process observed through intermittent platform events.

\item We reformulate open online social simulation as recursive prediction and update of a system-visible event history, with declared conditioning information, event-level prediction targets, and an operational system-environment boundary, without requiring equivalence between an LLM and a human mind, in structural alignment with Luhmann's \textbf{social systems theory}.

\item We instantiate and evaluate this formulation with an open-population simulator combining endogenous participant realization, autonomous action-and-content generation, and future-constrained learning. Across five Qwen backbones, the resulting framework improves long-horizon trajectory fidelity without access to the realized future participant sequence.
\end{itemize}

\begin{figure*}
    \centering
    \includegraphics[width=0.9\linewidth]{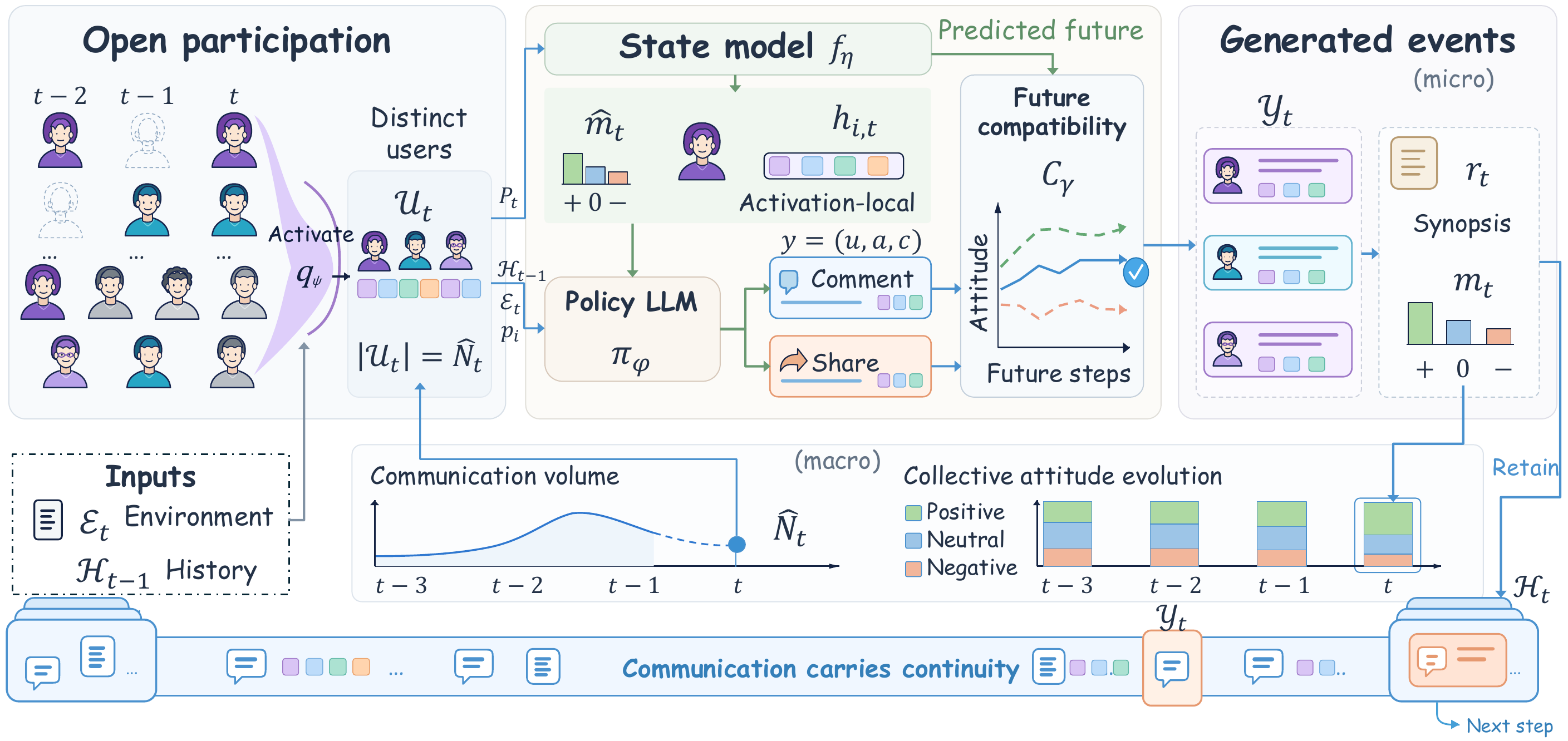}
    \caption{\textbf{Overview of OpenSim.}
OpenSim recursively models participation, mean-field state transition, and future-constrained communication generation, with realized communication updating the collective state and platform history for subsequent steps.}
    \label{fig:placeholder}
\end{figure*}

\section{Background and Problem Formulation}
\label{sec:background}

We formalize open-population social simulation with sparse, time-varying participation (\S\ref{sec:problem}), examine limitations of persistent-agent dynamics in LLM-based simulators (\S\ref{sec:persistent}), and define a simulation target for observable communication with an explicit system--environment boundary (\S\ref{sec:target}).

\subsection{Open-Population Social Systems}
\label{sec:problem}

We consider an online social process evolving over discrete time steps $t\in\{1,\ldots,T\}$.
Let $\mathcal{U}$ denote a global pool of potential participants.Unlike closed-agent simulations, the users participating in an online event change over time.
At each step, only a sparse and time-varying subset $\mathcal{U}_t\subseteq\mathcal{U}$ becomes active, with $N_t=|\mathcal{U}_t|$~\citep{peak-height-lupeng-2018predicting,peak-time-lupeng2018big}.
The same user may reappear across non-consecutive steps.
Let $\mathcal{H}_{t-1}$ denote the observable platform history before step $t$, and $\mathcal{E}_t$ the exogenous information introduced at that step.
For each active user $u_i\in\mathcal{U}_t$, a communicative event is represented as $y_{i,t}=(u_i,a_{i,t},c_{i,t})$, where $a_{i,t}$ is the platform action and $c_{i,t}$ is the expressed or propagated content.
The realized event set is $\mathcal{Y}_t=\{y_{i,t}:u_i\in\mathcal{U}_t\}$, which is incorporated into the observable platform history $\mathcal{H}_t$.

We call this an \textbf{open-population social system}: participation is sparse and intermittent, with a changing active population.
Persistent account identifiers do not imply continuously observable individual-state trajectories; continuity instead resides in the observable platform history $\mathcal{H}_t$.

\subsection{Persistent-Agent Formulation}
\label{sec:persistent}

Existing LLM-based social simulators commonly assume a \emph{closed} agent population, where a fixed set of computational agents is instantiated before the rollout and persistently maintained throughout the simulation~\citep{Socioverse2025socioverse,Oasis-2025}.
Each agent $i$ maintains an internal state $z_{i,t}$, which may encode opinion, emotion, memory, or other latent attributes, and evolves recursively as
\begin{equation}
z_{i,t+1}
\sim
G_{\theta}(z\mid z_{i,t},p_i,\mathcal{H}_t),
\qquad
y_{i,t+1}
\sim
\pi_{\phi}(y\mid z_{i,t+1},p_i,\mathcal{H}_t),
\label{eq:persistent}
\end{equation}
where $p_i$ denotes system-visible participant information.
Such information may condition observable communication, but does not by itself identify a private psychological trajectory.
The resulting process is carried forward as
$
\{z_{i,t}\}
\rightarrow
\{z_{i,t+1}\}
\rightarrow
\mathcal{Y}_{t+1}
\rightarrow
\mathcal{H}_{t+1}.
$

This formulation differs fundamentally from the open-population setting, where only a time-varying subset is active at each step.
For a user absent between two active steps, the system provides no direct observation of the intermediate individual state, so recursively carrying forward $z_{i,t}$ propagates simulator-generated latent states beyond what is identified by the observable platform history.
Fixing the same agents throughout the rollout reduces the problem to a closed-population approximation, whereas supplying future active users from the realized trajectory introduces future participation information~\citep{MF-LLM2025mf,zhang2026coupling}.
Thus, in an open-population system, \textbf{both \emph{who participates} and \emph{what they communicate} must be generated as part of the social process}.

\subsection{Simulation Target and Boundary}
\label{sec:target}

We therefore shift the recursive object from persistent individual trajectories to the observable evolution of communication.
Given the platform history $\mathcal{H}_t$ and declared environmental input $\mathcal{E}_{t+1}$, the simulator generates
\begin{equation}
\mathcal{Y}_{t+1}
\sim
p_{\theta}\!\left(
\cdot\mid\mathcal{H}_t,\mathcal{E}_{t+1}
\right),
\qquad
\mathcal{H}_{t+1}
=
F\!\left(
\mathcal{H}_t,\mathcal{Y}_{t+1},\mathcal{E}_{t+1}
\right).
\label{eq:communication_process}
\end{equation}
Accordingly, the recursive process becomes
$\mathcal{H}_t\rightarrow\mathcal{Y}_{t+1}\rightarrow\mathcal{H}_{t+1}$,
rather than a recursion over persistent latent individual states.

This formulation separates latent individual state $z_{i,t}$, participant activation $\mathcal{U}_t$, and environmental input $\mathcal{E}_t$.
Latent individual states are not treated as persistent recursive system state; participant activation is generated as part of the simulated process; and environmental inputs may affect the process without being endogenous prediction targets.
Operationally, variables generated by the simulator and recursively incorporated into $\mathcal{H}$ belong to the simulated system, whereas externally supplied variables whose generation is not modeled belong to $\mathcal{E}$.

Although motivated by the computational requirements of open-population simulation, this formulation structurally corresponds to Luhmann's view that social systems reproduce through communication rather than through the continuous evolution of individual consciousness~\citep{luhmann-1995-socialsystem}.
From this perspective, the continuity of a social process can be located in recursively connected communications, rather than in continuously maintained internal states of persistent individuals.
Esposito's notion of \emph{artificial communication} further suggests that algorithmic systems may participate in producing communication without being treated as equivalents of human consciousness~\citep{esposito2022artificial}.
This distinction is particularly relevant to LLM-based simulation: an LLM may model the production of observable communication without requiring the stronger assumption that it constitutes a persistent psychological surrogate for a real participant.

\section{Methodology}
\label{sec:method}
We instantiate the proposed formulation with three components: participation modeling (\S\ref{sec:interaction}), collective state transition (\S\ref{sec:state_transition}), and communicative policy modeling (\S\ref{sec:policy}). These components jointly govern participation, collective state, and communication at each simulation step, as summarized in Algorithm~\ref{alg:simulation} in the appendix.

\subsection{Participation Modeling}
\label{sec:interaction}

The first component models the scale and composition of communication at time $t$ by forecasting the communication volume and activating the participants who realize these events.

\noindent \textbf{Communication Volume Forecasting}
Let $N_t$ denote the number of communicative events at time $t$. We forecast the communication volume using information available before the current step:
\begin{equation}
\hat{N}_t
=
f_{\omega}
\left(
\mathcal{H}_{t-1},
\mathcal{E}_t
\right),
\label{eq:interaction_forecast}
\end{equation}
where $\mathcal{H}_{t-1}$ is the accumulated platform history and $\mathcal{E}_t$ is the declared environmental input. The forecasting model summarizes recent communication dynamics and event-level context without accessing the realized future participation volume.

\noindent \textbf{Global-Pool Participant Activation}
Given $\hat{N}_t$, participants are activated from a predefined global user pool $\mathcal{U}=\{u_1,u_2,\ldots,u_M\}$. Each candidate $u_i$ is represented by system-visible information $p_i$, such as profile attributes, interest labels, and historical activity. We assign each candidate an activation likelihood
\begin{equation}
q_{\psi}
\left(
u_i
\mid
p_i,
\mathcal{H}_{t-1},
\mathcal{E}_t
\right),
\qquad
u_i\in\mathcal{U},
\label{eq:participant_activation}
\end{equation}
and obtain $\mathcal{U}_t$ by sampling $\hat{N}_t$ participants without replacement according to these likelihoods, such that $|\mathcal{U}_t|=\hat{N}_t$. Each activated participant realizes one communicative event at the current step. Activation is step-specific, allowing users to appear once, reappear after inactivity, or never reappear; process-level continuity is carried by $\mathcal{H}_t$ rather than by persistently maintained agents.

\subsection{Mean-Field State Transition}
\label{sec:state_transition}

Following M$^2$Sim~\citep{zhang2026coupling}, we represent the collective communication state with a distributional mean field $m_{t-1}\in\Delta^2$ and a textual synopsis $r_{t-1}$ retained in $\mathcal{H}_{t-1}$.
Unlike persistent-agent formulations, participant-level latent representations are inferred only for users activated at the current step.

Given the current participant features $P_t$, platform history $\mathcal{H}_{t-1}$, and environmental input $\mathcal{E}_t$, the transition model predicts
\begin{equation}
\left(
\hat{m}_t,
\{h_{i,t}\}_{u_i\in\mathcal{U}_t}
\right)
=
f_{\eta}
\left(
\mathcal{H}_{t-1},
\mathcal{E}_t,
P_t
\right),
\label{eq:state_transition}
\end{equation}
where $\hat{m}_t$ is the predicted attitude composition of the current communication and $h_{i,t}$ is an activation-local latent representation used by the communication policy.
These representations are not recursively maintained across inactive periods.

After the communicative events $\mathcal{Y}_t$ are generated, their expressed attitudes are aggregated to update the realized mean field $m_t$, while the textual synopsis is updated as
\begin{equation}
r_t
=
g_{\xi}
\left(
r_{t-1},
m_t,
\mathcal{Y}_t,
\mathcal{E}_t
\right).
\label{eq:textual_mean_field}
\end{equation}
Thus, $\hat{m}_t$ guides current communication, whereas $(m_t,r_t)$ records the realized simulated process and is incorporated into $\mathcal{H}_t$ for subsequent steps.

\subsection{Future-Constrained Communication Policy}
\label{sec:policy}

Given an activated participant $u_{t,j}$, the policy generates an action and its associated content:
\begin{equation}
(a_{t,j},c_{t,j})
\sim
\pi_{\phi}\!\left(
\cdot\mid
\mathcal{H}_{t-1},
\mathcal{E}_t,
p_{u_{t,j}},
h_{t,j},
\hat{m}_t
\right),
\label{eq:policy_generation}
\end{equation}
yielding the communicative event
$y_{t,j}=(u_{t,j},a_{t,j},c_{t,j})$.

Local likelihood alone does not ensure that a plausible response is compatible with subsequent collective evolution.
Building on the long-horizon consistency objective of M$^2$Sim~\citep{zhang2026coupling}, we introduce a response-sensitive trajectory critic
\begin{equation}
\widehat{\mathbf{m}}_{t+1:t+K}
=
C_{\gamma}
\left(
\mathcal{H}_{t-1},
\hat{m}_t,
y_{t,j}
\right),
\label{eq:trajectory_critic}
\end{equation}
which predicts the subsequent mean-field trajectory conditioned on the candidate communication.
Given the observed training trajectory $\mathbf{m}_{t+1:t+K}$, the trajectory loss is
\begin{equation}
\mathcal{L}_{\mathrm{traj}}
=
\sum_{k=1}^{K}
D\!\left(
m_{t+k},
\widehat{m}_{t+k}
\right),
\label{eq:trajectory_loss}
\end{equation}
where $D(\cdot,\cdot)$ measures distributional discrepancy.

A trajectory critic may otherwise rely on the shared context while ignoring the response itself.
We therefore construct masked and cross-context mismatched responses $\tilde{y}_{t,j}$ and encourage the observed response to better match its associated future trajectory:
\begin{equation}
\mathcal{L}_{\mathrm{sens}}
=
\max\!\left(
0,\,
\delta
+
D\!\left(
\mathbf{m}_{t+1:t+K},
C_{\gamma}
\left(
\mathcal{H}_{t-1},
\hat{m}_t,
y_{t,j}
\right)
\right)
-
D\!\left(
\mathbf{m}_{t+1:t+K},
C_{\gamma}
\left(
\mathcal{H}_{t-1},
\hat{m}_t,
\tilde{y}_{t,j}
\right)
\right)
\right).
\label{eq:sensitivity_loss}
\end{equation}

After communication SFT and critic initialization, the policy adapters and trajectory critic are jointly optimized with
\begin{equation}
\mathcal{L}_{\mathrm{joint}}
=
\mathcal{L}_{\mathrm{comm}}
+
\lambda_{\mathrm{traj}}\mathcal{L}_{\mathrm{traj}}
+
\lambda_{\mathrm{sens}}\mathcal{L}_{\mathrm{sens}}.
\label{eq:policy_objective}
\end{equation}

At inference time, candidate communications are ranked by the critic according to their compatibility with the state model's predicted future mean-field evolution.
The selected events constitute $\mathcal{Y}_t$ and are used to update the realized mean field and textual synopsis.

\begin{algorithm}[b]
\caption{\textbf{Open-Population Communication Simulation}}
\label{alg:simulation}
\small
\begin{algorithmic}[1]
\Require Global user pool $\mathcal{U}$, horizon $T$;
volume forecaster $f_{\omega}$, activation model $q_{\psi}$,
state transition model $f_{\eta}$, communication policy $\pi_{\phi}$,
trajectory critic $C_{\gamma}$, textual updater $g_{\xi}$;
environmental inputs $\{\mathcal{E}_t\}_{t=1}^{T}$
\State \textbf{Initialize:} platform history $\mathcal{H}_0$ and collective state $(m_0,r_0)$

\For{$t=1$ \textbf{to} $T$}

    \State $\hat{N}_t \gets f_{\omega}(\mathcal{H}_{t-1},\mathcal{E}_t)$
    \Comment{\textcolor{CMTgreen}{Participation modeling}}

    \State $\mathcal{U}_t \gets
    \textsc{Activate}(\mathcal{U},\hat{N}_t;
    q_{\psi},\mathcal{H}_{t-1},\mathcal{E}_t)$

    \State $(\hat{m}_t,\{h_{t,j}\})
    \gets
    f_{\eta}(\mathcal{H}_{t-1},\mathcal{E}_t,P_t)$
    \Comment{\textcolor{CMTgreen}{Mean-field transition}}

    \ForAll{$u_{t,j}\in\mathcal{U}_t$}
        \State $y_{t,j}
        \gets
        \textsc{GenerateSelect}
        (\pi_{\phi},C_{\gamma};
        \mathcal{H}_{t-1},
        \mathcal{E}_t,
        p_{u_{t,j}},
        h_{t,j},
        \hat{m}_t)$
        \Comment{\textcolor{CMTgreen}{Future-constrained policy}}
    \EndFor

    \State $\mathcal{Y}_t\gets\{y_{t,j}\}$

    \State $m_t\gets\textsc{MeanFieldUpdate}(m_{t-1},\mathcal{Y}_t)$
    \Comment{\textcolor{CMTgreen}{Realized mean field}}

    \State $r_t\gets g_{\xi}(r_{t-1},m_t,\mathcal{Y}_t,\mathcal{E}_t)$

    \State $\mathcal{H}_t\gets
    \textsc{Update}(\mathcal{H}_{t-1},\mathcal{E}_t,\mathcal{Y}_t,m_t,r_t)$
    \Comment{\textcolor{CMTgreen}{History update}}

\EndFor

\State \textbf{Output:} $\{\mathcal{Y}_t\}_{t=1}^{T}$
\end{algorithmic}
\end{algorithm}

\section{Experiments}
\label{sec:experiment}

We evaluate the proposed framework from four perspectives:
(1) long-horizon simulation against representative LLM-based social simulation baselines across multiple backbones (\S\ref{sec:main_results});
(2) environmental communication interventions and their effects on subsequent system evolution (\S\ref{sec:interventions});
(3) boundary analysis through perturbations of participant priors and environmental inputs (\S\ref{sec:boundary_analysis});
and (4) ablations of key components for simulation fidelity and long-horizon consistency (\S\ref{sec:ablation}).

\subsection{Experimental Setup}
\label{sec:exp_setup}

\noindent\textbf{Dataset.}
We use an event-level Chinese social-media corpus with intervention-aware annotations, macro-level interaction volume and attitude trajectories, and fine-grained micro-level user interactions. This structure supports joint modeling of external interventions, collective dynamics, and individual communication behavior. Details are provided in Appendix~\ref{app:dataset}.

\noindent\textbf{Backbone Models.}
We evaluate five backbone settings to examine robustness across model families and scales:
Qwen2-1.5B, Qwen2.5-1.5B, Qwen2.5-3B, Qwen2.5-7B, and Qwen3-1.7B.
All methods within the same backbone setting follow the same evaluation protocol.

\noindent\textbf{Metrics.}
We evaluate both distributional trajectory fidelity and action-level prediction.
For collective dynamics, we report KL divergence (\textbf{KL}), Wasserstein distance (\textbf{Wass.}), and Dynamic Time Warping distance (\textbf{DTW}), where lower values indicate better trajectory agreement.
For action-level prediction, we report Macro-F1, Micro-F1, and negative log-likelihood (\textbf{NLL}).

\begin{table*}[t]
\centering
\footnotesize
\begingroup
\definecolor{basepurple}{HTML}{bdb2ff}
\definecolor{BaselineTint}{HTML}{F0F0F0}
\caption[Main results across backbone models.]{\textbf{Main results across backbone models.}
OpenSim / OpenSim$^\dagger$ use event-specific / external participant pools.
$^*$: observed future participant sequence. \textbf{Bold}/\underline{underline}: best/second-best per model.
Parentheses: improvement (\%) over \textcolor{gray}{gray}-shaded MF-LLM;
\textcolor{green!60!black}{green}/\textcolor{red!70!black}{red} = better/worse. Darker purple = larger gains.}
\label{tab:main_results}
\scriptsize
\newcommand{\OSgain}[1]{\,{\fontsize{6.5}{7.5}\selectfont\textcolor{green!60!black}{(+#1\%)}}}
\newcommand{\OSloss}[1]{\,{\fontsize{6.5}{7.5}\selectfont\textcolor{red!70!black}{(-#1\%)}}}
\setlength{\tabcolsep}{2.5pt}
\renewcommand{\arraystretch}{0.965}
\setlength{\aboverulesep}{1pt}
\setlength{\belowrulesep}{1pt}
\begin{tabularx}{\textwidth}{ll*{6}{>{\raggedleft\arraybackslash}X}}
\toprule
\textbf{Model} & \textbf{Method}
& \mbox{\textbf{KL}\,$\downarrow$} & \mbox{\textbf{Wass.}\,$\downarrow$}
& \mbox{\textbf{DTW}\,$\downarrow$} & \mbox{\textbf{Macro-F1}\,$\uparrow$}
& \mbox{\textbf{Micro-F1}\,$\uparrow$} & \mbox{\textbf{NLL}\,$\downarrow$} \\
\midrule
& Direct LLM$^*$
& 2.4090 & 0.3336 & 0.3055 & 0.3441 & 0.5686 & 4.6838 \\
& Social Retrieval$^*$
& 1.6678 & 0.2568 & 0.2063 & 0.3636 & 0.5526 & 5.1355 \\
\rowcolor{BaselineTint}
\cellcolor{white} & MF-LLM$^*$
& \underline{1.2494} & \underline{0.2246} & \underline{0.1773}
& \underline{0.4311} & \underline{0.6295} & 4.6060 \\
\cellcolor{white} & \textbf{OpenSim}$^\dagger$
& 2.0773\OSloss{66.3} & 0.2905\OSloss{29.3} & 0.2256\OSloss{27.2} & 0.4209\OSloss{2.4} & 0.6199\OSloss{1.5} & \cellcolor{basepurple!20}\textbf{4.1414}\OSgain{10.1} \\
\cellcolor{white}\multirow{-5}{*}{\shortstack{Qwen\\2-1.5B}}
& \textbf{OpenSim}
& \cellcolor{basepurple!18}\textbf{1.1352}\OSgain{9.1} & \cellcolor{basepurple!28}\textbf{0.1922}\OSgain{14.4} & \cellcolor{basepurple!58}\textbf{0.1261}\OSgain{28.9}
& \cellcolor{basepurple!6}\textbf{0.4444}\OSgain{3.1} & \cellcolor{basepurple!10}\textbf{0.6616}\OSgain{5.1} & \cellcolor{basepurple!12}\underline{4.3377}\OSgain{5.8} \\
\midrule
& Direct LLM$^*$
& 3.3485 & 0.2844 & 0.3384 & 0.3730 & 0.4651 & 6.3707 \\
& Social Retrieval$^*$
& 2.4112 & 0.2243 & 0.2695 & 0.4569 & 0.5566 & 6.2568 \\
\rowcolor{BaselineTint}
\cellcolor{white} & MF-LLM$^*$
& \underline{1.4739} & \underline{0.1941} & \underline{0.1396}
& \underline{0.5408} & 0.5881 & 5.1429 \\
\cellcolor{white} & \textbf{OpenSim}$^\dagger$
& 1.5739\OSloss{6.8} & 0.2142\OSloss{10.4} & 0.1934\OSloss{38.5} & 0.5001\OSloss{7.5} & \cellcolor{basepurple!4}\underline{0.6008}\OSgain{2.2} & \cellcolor{basepurple!28}\textbf{4.4138}\OSgain{14.2} \\
\cellcolor{white}\multirow{-5}{*}{\shortstack{Qwen\\2.5-1.5B}}
& \textbf{OpenSim}
& \cellcolor{basepurple!56}\textbf{1.0667}\OSgain{27.6} & \cellcolor{basepurple!22}\textbf{0.1724}\OSgain{11.2} & \cellcolor{basepurple!20}\textbf{0.1252}\OSgain{10.3}
& \cellcolor{basepurple!26}\textbf{0.6097}\OSgain{12.7} & \cellcolor{basepurple!12}\textbf{0.6252}\OSgain{6.3} & \cellcolor{basepurple!22}\underline{4.5980}\OSgain{10.6} \\
\midrule
& Direct LLM$^*$
& 3.8755 & 0.3603 & 0.3771 & 0.3138 & 0.4561 & 13.9996 \\
& Social Retrieval$^*$
& 1.8466 & 0.2283 & 0.2591 & 0.3817 & 0.5129 & 14.0303 \\
\rowcolor{BaselineTint}
\cellcolor{white} & MF-LLM$^*$
& 1.5061 & 0.1546 & 0.2311 & 0.4598 & 0.6222 & 4.5446 \\
\cellcolor{white} & \textbf{OpenSim}$^\dagger$
& \cellcolor{basepurple!0}\underline{1.5030}\OSgain{0.2} & \cellcolor{basepurple!6}\underline{0.1495}\OSgain{3.3} & \cellcolor{basepurple!42}\underline{0.1824}\OSgain{21.1}
& \cellcolor{basepurple!2}\underline{0.4652}\OSgain{1.2} & \cellcolor{basepurple!10}\underline{0.6545}\OSgain{5.2} & \cellcolor{basepurple!2}\textbf{4.4980}\OSgain{1.0} \\
\cellcolor{white}\multirow{-5}{*}{\shortstack{Qwen\\2.5-3B}}
& \textbf{OpenSim}
& \cellcolor{basepurple!66}\textbf{1.0048}\OSgain{33.3} & \cellcolor{basepurple!36}\textbf{0.1273}\OSgain{17.7} & \cellcolor{basepurple!82}\textbf{0.1369}\OSgain{40.8}
& \cellcolor{basepurple!20}\textbf{0.5058}\OSgain{10.0} & \cellcolor{basepurple!34}\textbf{0.7278}\OSgain{17.0} & \cellcolor{basepurple!0}\underline{4.5356}\OSgain{0.2} \\
\midrule
& Direct LLM$^*$
& 3.0914 & 0.3107 & 0.3097 & 0.3471 & 0.5581 & 13.0946 \\
& Social Retrieval$^*$
& 2.5142 & 0.2981 & 0.2882 & 0.3701 & 0.5821 & 12.8823 \\
\rowcolor{BaselineTint}
\cellcolor{white} & MF-LLM$^*$
& 1.6049 & \underline{0.2234} & 0.2564
& 0.4380 & 0.6117 & \textbf{4.6823} \\
\cellcolor{white} & \textbf{OpenSim}$^\dagger$
& \cellcolor{basepurple!22}\underline{1.4294}\OSgain{10.9} & 0.2289\OSloss{2.5} & \cellcolor{basepurple!4}\underline{0.2509}\OSgain{2.1}
& \cellcolor{basepurple!4}\underline{0.4455}\OSgain{1.7} & \cellcolor{basepurple!8}\underline{0.6340}\OSgain{3.6} & \underline{4.9849}\OSloss{6.5} \\
\cellcolor{white}\multirow{-5}{*}{\shortstack{Qwen\\2.5-7B}}
& \textbf{OpenSim}
& \cellcolor{basepurple!94}\textbf{0.8509}\OSgain{47.0} & \cellcolor{basepurple!42}\textbf{0.1759}\OSgain{21.3} & \cellcolor{basepurple!78}\textbf{0.1572}\OSgain{38.7}
& \cellcolor{basepurple!22}\textbf{0.4855}\OSgain{10.8} & \cellcolor{basepurple!20}\textbf{0.6752}\OSgain{10.4} & 5.0921\OSloss{8.8} \\
\midrule
& Direct LLM$^*$
& 5.1480 & 0.4279 & 0.4191 & 0.3314 & 0.5236 & 25.7470 \\
& Social Retrieval$^*$
& 4.7743 & 0.3453 & 0.2210 & 0.3478 & 0.5451 & 25.4819 \\
\rowcolor{BaselineTint}
\cellcolor{white} & MF-LLM$^*$
& 2.0902 & 0.2549 & 0.1715 & \underline{0.5095} & 0.6518 & 6.3621 \\
\cellcolor{white} & \textbf{OpenSim}$^\dagger$
& \cellcolor{basepurple!74}\underline{1.3115}\OSgain{37.3} & \cellcolor{basepurple!14}\underline{0.2377}\OSgain{6.7} & \cellcolor{basepurple!16}\underline{0.1583}\OSgain{7.7}
& 0.4865\OSloss{4.5} & \cellcolor{basepurple!10}\underline{0.6825}\OSgain{4.7} & \cellcolor{basepurple!62}\textbf{4.4081}\OSgain{30.7} \\
\cellcolor{white}\multirow{-5}{*}{\shortstack{Qwen\\3-1.7B}}
& \textbf{OpenSim}
& \cellcolor{basepurple!96}\textbf{1.0929}\OSgain{47.7} & \cellcolor{basepurple!60}\textbf{0.1780}\OSgain{30.2} & \cellcolor{basepurple!72}\textbf{0.1098}\OSgain{36.0}
& \cellcolor{basepurple!14}\textbf{0.5445}\OSgain{6.9} & \cellcolor{basepurple!34}\textbf{0.7610}\OSgain{16.8} & \cellcolor{basepurple!60}\underline{4.4414}\OSgain{30.2} \\
\bottomrule
\end{tabularx}
\endgroup
\end{table*}

\subsection{Do We Need Future Participants?}
\label{sec:main_results}

\noindent\textbf{Baselines.}
We compare OpenSim with three representative social simulation baselines, \textit{all of which use the observed participant sequence from the realized trajectory, leaking future participation information}:
(1) \textbf{Direct LLM$^{*}$}, which generates actions directly from the event context~\citep{s3-2023,GAS32025-ga};
(2) \textbf{Social Retrieval$^{*}$}, which conditions generation on retrieved social traces~\citep{Agentsociety-2025agentsociety,Oasis-2025};
and (3) \textbf{MF-LLM$^{*}$}, which uses a recursively updated textual mean field to guide generation~\citep{MF-LLM2025mf}.
For OpenSim, we evaluate two participant pools:
\textbf{OpenSim} uses an event-specific participant pool constructed from the corresponding event's interaction records, while
\textbf{OpenSim$^\dagger$} uses an external participant pool of 15,992 real-user profiles from SocioVerse~\citep{Socioverse2025socioverse}.
Table~\ref{tab:main_results} reports the main simulation results.

\textbf{(1) OpenSim remains superior without access to the realized future participant sequence.}
OpenSim uses only an event-specific candidate pool and generates participation during simulation, whereas all baselines are supplied with the observed participant sequence from the realized trajectory.
Despite this information disadvantage, OpenSim consistently achieves substantially better trajectory fidelity, reducing average KL by \textbf{71.2\%} over Direct LLM and \textbf{35.0\%} over MF-LLM.
This shows that accurate long-horizon simulation does not require knowing in advance which users will participate at each future step.

\textbf{(2) Exact event-specific participants are not necessary for recovering collective dynamics.}
Even when OpenSim$^\dagger$ uniformly samples participants from an event-agnostic global real-user pool, its performance remains close to MF-LLM, which is given the realized participant sequence.
Their average KL is nearly identical (1.5790 vs.\ 1.5849), while OpenSim$^\dagger$ also achieves comparable trajectory distances and action-level performance.
This suggests that sparse system-visible user information can be sufficient for reproducing collective dynamics without requiring the simulator to recover the exact individuals who appear in the realized event trajectory.

\textbf{(3) These patterns are consistent across backbone models.}
The same conclusions hold across all five Qwen backbones, spanning Qwen2, Qwen2.5, and Qwen3 from 1.5B to 7B parameters.
OpenSim consistently outperforms sequence-conditioned baselines in trajectory fidelity, while OpenSim$^\dagger$ remains competitive under the substantially weaker global-pool setting.
The gains therefore reflect the proposed simulation formulation rather than a particular backbone or model scale.

\begin{table*}[t]
\centering
\footnotesize
\begingroup
\definecolor{BaselineTint}{HTML}{F0F0F0}
\definecolor{RemoveTint}{HTML}{FFF1DF}
\definecolor{OppositeTint}{HTML}{EDE4F7}

\caption[Communication interventions across backbones.]{%
\textbf{Communication interventions across backbones.}
\textsc{Original} retains the original intervention,
\textsc{Removed} removes it, and
\textsc{Opposite} replaces it with an intervention of the opposite stance.
\textbf{Bold}/\underline{underline} indicate the best/second-best agreement with the corresponding reference.}
\label{tab:counterfactual}

\scriptsize
\setlength{\tabcolsep}{3pt}
\renewcommand{\arraystretch}{0.965}
\setlength{\aboverulesep}{1pt}
\setlength{\belowrulesep}{1pt}

\begin{tabularx}{\textwidth}{
@{}ll*{6}{>{\centering\arraybackslash}X}
!{\vrule width 0.5pt}
*{2}{>{\centering\arraybackslash}X}@{}}
\toprule
\textbf{Model} & \textbf{Intervention}
& \textbf{KL} & \textbf{Wass.} & \textbf{DTW}
& \mbox{\textbf{Macro-F1}}
& \mbox{\textbf{Micro-F1}}
& \textbf{NLL}
& \mbox{\textbf{Att. Acc.}\,$\uparrow$}
& \mbox{\textbf{Att. DTW}\,$\downarrow$} \\
\midrule

\rowcolor{BaselineTint}
\cellcolor{white} & \textsc{Original}
& \textbf{1.1352} & \textbf{0.1922} & \textbf{0.1261}
& \textbf{0.4444} & \textbf{0.6616} & \textbf{4.3377}
& \textbf{64.4} & \textbf{0.1779} \\
\rowcolor{RemoveTint}
\cellcolor{white} & \textsc{Removed}
& 3.4784 & 0.4974 & 0.4528
& 0.3502 & 0.5383 & 4.7429
& \underline{49.6} & \underline{0.3235} \\
\rowcolor{OppositeTint}
\cellcolor{white}
\multirow{-3}{*}{\shortstack{Qwen\\2-1.5B}}
& \textsc{Opposite}
& \underline{2.0065} & \underline{0.4126} & \underline{0.3652}
& \underline{0.3539} & \underline{0.5509} & \underline{4.6957}
& 35.1 & 0.4242 \\
\midrule

\rowcolor{BaselineTint}
\cellcolor{white} & \textsc{Original}
& \textbf{1.0667} & \textbf{0.1724} & \textbf{0.1252}
& \textbf{0.6097} & \textbf{0.6252} & \textbf{4.5980}
& \textbf{65.6} & \textbf{0.1717} \\
\rowcolor{RemoveTint}
\cellcolor{white} & \textsc{Removed}
& 2.4076 & \underline{0.3495} & \underline{0.2864}
& \underline{0.3771} & \underline{0.5656} & \underline{5.1793}
& \underline{36.8} & \underline{0.4212} \\
\rowcolor{OppositeTint}
\cellcolor{white}
\multirow{-3}{*}{\shortstack{Qwen\\2.5-1.5B}}
& \textsc{Opposite}
& \underline{2.2847} & 0.3693 & 0.3089
& 0.3741 & 0.5653 & 5.4473
& 35.1 & 0.4671 \\
\midrule

\rowcolor{BaselineTint}
\cellcolor{white} & \textsc{Original}
& \textbf{1.0929} & \textbf{0.1780} & \textbf{0.1098}
& \textbf{0.5445} & \textbf{0.7610} & \textbf{4.4414}
& \textbf{69.1} & \textbf{0.1241} \\
\rowcolor{RemoveTint}
\cellcolor{white} & \textsc{Removed}
& 3.9696 & \underline{0.2611} & \underline{0.2606}
& \underline{0.3804} & \underline{0.6870} & \underline{4.7457}
& \underline{41.7} & \underline{0.4810} \\
\rowcolor{OppositeTint}
\cellcolor{white}
\multirow{-3}{*}{\shortstack{Qwen\\3-1.7B}}
& \textsc{Opposite}
& \underline{2.0284} & 0.3300 & 0.3328
& 0.3570 & 0.6582 & 4.7813
& 36.4 & 0.4942 \\
\bottomrule
\end{tabularx}

\par\vspace{2pt}
\begin{minipage}{\textwidth}
\scriptsize
\textit{Notes.}
Attitude metrics are computed on 57 common supported event-days after warm-up.
Att.\ Acc.\ measures dominant-class agreement, and Att.\ DTW uses path-normalized Euclidean costs.
Both are text-derived attitude proxies rather than native state-transition states.
\end{minipage}

\endgroup
\end{table*}

\begin{figure*}[t]
    \centering
    \includegraphics[width=\textwidth]{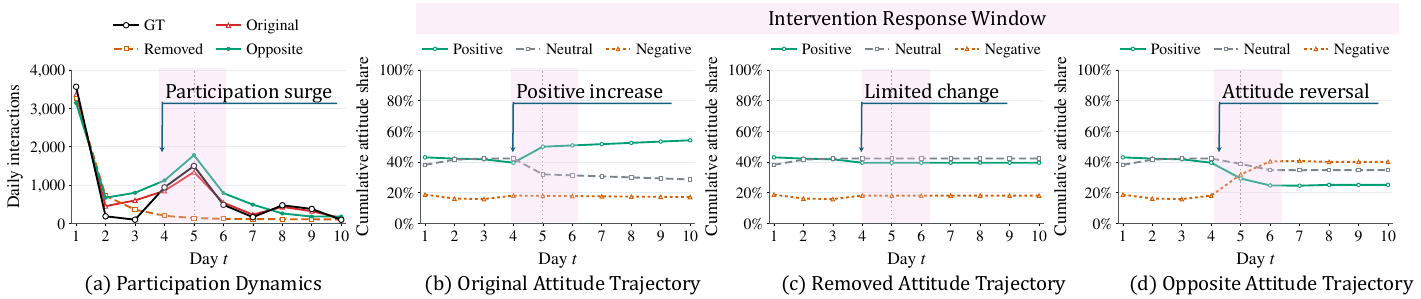}
    \caption{\textbf{Example of communication intervention dynamics.}
    (a) Participation dynamics across intervention settings.
    (b--d) Cumulative attitude trajectories under \textsc{Original}, \textsc{Removed}, and \textsc{Opposite}, respectively.
    The dashed line marks the intervention point.}
    \label{fig:intervention_case}
\end{figure*}


\subsection{How Do Communication Interventions Reshape System Evolution?}
\label{sec:interventions}

\textbf{Setting: Controlled Communication Interventions.}
For each test event, we identify a consequential external communication and construct three conditions while keeping the pre-intervention history unchanged: \textsc{Original} retains the intervention, \textsc{Removed} removes it, and \textsc{Opposite} replaces it with an intervention of the opposite stance.
After the intervention point, participant activation, state transition, and communication generation are autonomously rolled out by the same simulator.
\textsc{Original} uses one rollout per event, while \textsc{Removed} and \textsc{Opposite} average four rollouts.

\textbf{Facet A: Removal primarily attenuates participation and distributional support.}
\textsc{Removed} produces the largest KL divergence across all three backbones, averaging $3.2852$, compared with $2.1065$ for \textsc{Opposite} and $1.0983$ for \textsc{Original}.
Removing a consequential communication suppresses the downstream activation associated with that intervention, reducing participation and distributional support in the subsequent process.
This produces the strongest overall distributional mismatch, while the resulting trajectory changes are comparatively less directional.

\textbf{Facet B: Opposite interventions redirect attitude and temporal evolution.}
\textsc{Opposite} preserves the intervention signal but reverses its stance, causing attitudes, emotions, and expressed positions to evolve in the opposite direction.
On Qwen2.5-1.5B and Qwen3-1.7B, it produces larger Wasserstein and DTW deviations than \textsc{Removed}; it yields the largest attitude discrepancy on every backbone, with the lowest average Att.\ Acc.\ ($35.5\%$) and highest Att.\ DTW ($0.4618$), compared with $42.7\%$ and $0.4086$ for \textsc{Removed}.
Thus, removal primarily attenuates the communication process, whereas an opposite intervention more strongly redirects its semantic and attitudinal trajectory.

\subsection{What Information Should the Simulator Receive?}
\label{sec:boundary_analysis}

\textbf{Setting: Boundary-Controlled Comparison.}
We isolate two modeling boundaries while keeping the state-transition and communication-policy modules fixed:
(1) whether participation is generated by the simulator or supplied from the future participant sequence in the realized trajectory, and
(2) whether consequential external information enters as a declared environmental input or is reconstructed by a simulated information-source agent.
Their Cartesian product yields four controlled settings.
Under this decomposition, M$^2$Sim~\citep{zhang2026coupling} corresponds to the boundary assumptions that combine simulated information-source generation with the observed future participant sequence.
Table~\ref{tab:boundary_analysis} reports the results across three backbones.

\textbf{Fact C: Declared environmental inputs preserve the system--environment boundary.}
When participation is generated, supplying consequential external information as a declared environmental input reduces average KL from $1.3283$ to $1.0983$ ($17.3\%$), Wasserstein distance from $0.1980$ to $0.1809$, and DTW from $0.1339$ to $0.1204$.
The same pattern remains when the future participant sequence is supplied, where average KL decreases from $0.8390$ to $0.7094$.
External contingencies are therefore modeled more faithfully when they enter through the declared environment and the simulator predicts the communication that follows, rather than being reconstructed as endogenous products of the simulated population.

\textbf{Fact D: Future participation information improves reconstruction accuracy.}
Supplying the realized future participant sequence substantially improves trajectory matching under both environmental settings.
With declared environmental inputs, average KL decreases from $1.0983$ to $0.7094$ ($35.4\%$); when the external information is instead generated by a simulated source, it decreases from $1.3283$ to $0.8390$ ($36.8\%$).
The numerical gain therefore comes from information unavailable in genuine forward simulation: knowing in advance who will participate.
This explains why future-conditioned settings can achieve stronger reconstruction accuracy while solving an easier problem than the open-population simulation considered in our main experiments.

\begin{table*}[t]
\centering
\footnotesize
\begingroup
\definecolor{MainTint}{HTML}{F1EDFF}
\definecolor{BaselineTint}{HTML}{F0F0F0}

\caption[Boundary-controlled comparison across backbones.]{%
\textbf{Boundary-controlled comparison across backbones.}
\textsc{Declared} introduces consequential external information as a declared environmental input, whereas \textsc{Simulated} requires the simulated population to generate it.
\textsc{Generated} predicts participation during rollout, whereas \textsc{Future Sequence} supplies the observed participant sequence from the realized trajectory.
$^\dagger$ denotes the participant--environment assumptions corresponding to M$^2$Sim~\citep{zhang2026coupling}.
\textbf{Bold}/\underline{underline} indicate the best/second-best result within each backbone.}
\label{tab:boundary_analysis}

\scriptsize
\setlength{\tabcolsep}{3.2pt}
\renewcommand{\arraystretch}{0.98}
\setlength{\aboverulesep}{1pt}
\setlength{\belowrulesep}{1pt}

\begin{tabularx}{\textwidth}{
@{}lll*{6}{>{\centering\arraybackslash}X}@{}}
\toprule
\textbf{Model} &
\textbf{Environment} &
\textbf{Participation} &
\textbf{KL} $\downarrow$ &
\textbf{Wass.} $\downarrow$ &
\textbf{DTW} $\downarrow$ &
\mbox{\textbf{Macro-F1} $\uparrow$} &
\mbox{\textbf{Micro-F1} $\uparrow$} &
\textbf{NLL} $\downarrow$ \\
\midrule

\rowcolor{MainTint}
\cellcolor{white} &
\textsc{Declared} & \textsc{Generated}
& 1.1352 & 0.1922 & 0.1261
& \underline{0.4444} & \textbf{0.6616} & 4.3377 \\

\cellcolor{white} &
\textsc{Declared} & \textsc{Future Sequence}
& \textbf{0.7686} & \textbf{0.1869} & \textbf{0.1219}
& \textbf{0.4479} & \underline{0.6416} & \textbf{4.0911} \\

\cellcolor{white} &
\textsc{Simulated} & \textsc{Generated}
& 1.2633 & 0.2045 & 0.1392
& 0.4149 & 0.6059 & 4.3448 \\

\rowcolor{BaselineTint}
\cellcolor{white}
\multirow{-4}{*}{\shortstack{Qwen\\2-1.5B}}
& \textsc{Simulated} & \textsc{Future Sequence}$^\dagger$
& \underline{0.8575} & \underline{0.1892} & \underline{0.1240}
& 0.4260 & 0.6363 & \underline{4.1696} \\
\midrule

\rowcolor{MainTint}
\cellcolor{white} &
\textsc{Declared} & \textsc{Generated}
& 1.0667 & \underline{0.1724} & \textbf{0.1252}
& \underline{0.6097} & \underline{0.6252} & 4.5980 \\

\cellcolor{white} &
\textsc{Declared} & \textsc{Future Sequence}
& \textbf{0.7395} & \textbf{0.1674} & \underline{0.1295}
& \textbf{0.6339} & \textbf{0.6471} & \underline{4.3250} \\

\cellcolor{white} &
\textsc{Simulated} & \textsc{Generated}
& 1.3847 & 0.1906 & 0.1376
& 0.5977 & 0.6064 & 4.3886 \\

\rowcolor{BaselineTint}
\cellcolor{white}
\multirow{-4}{*}{\shortstack{Qwen\\2.5-1.5B}}
& \textsc{Simulated} & \textsc{Future Sequence}$^\dagger$
& \underline{0.9327} & 0.1860 & 0.1301
& 0.6080 & 0.6128 & \textbf{4.2038} \\
\midrule

\rowcolor{MainTint}
\cellcolor{white} &
\textsc{Declared} & \textsc{Generated}
& 1.0929 & \underline{0.1780} & \textbf{0.1098}
& 0.5445 & \underline{0.7610} & 4.4414 \\

\cellcolor{white} &
\textsc{Declared} & \textsc{Future Sequence}
& \textbf{0.6202} & \textbf{0.1726} & \underline{0.1168}
& \textbf{0.5772} & \textbf{0.7641} & \textbf{4.2779} \\

\cellcolor{white} &
\textsc{Simulated} & \textsc{Generated}
& 1.3370 & 0.1989 & 0.1249
& 0.5235 & 0.7339 & \underline{4.3599} \\

\rowcolor{BaselineTint}
\cellcolor{white}
\multirow{-4}{*}{\shortstack{Qwen\\3-1.7B}}
& \textsc{Simulated} & \textsc{Future Sequence}$^\dagger$
& \underline{0.7269} & 0.1817 & 0.1240
& \underline{0.5459} & 0.7411 & 4.3711 \\
\bottomrule
\end{tabularx}
\endgroup
\end{table*}

\begin{figure*}[t]
    \centering
    \includegraphics[width=\textwidth]{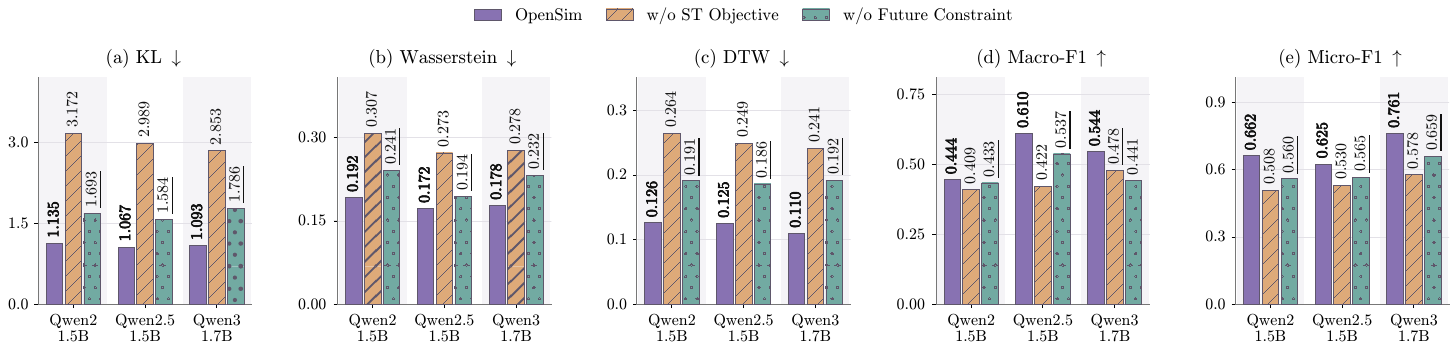}
    \caption{\textbf{Component ablations across backbones.}
    We remove state-transition supervision or the future-constrained policy objective.  
    \textbf{Bold}/\underline{underlined} values mark the best/second-best results per backbone.
    Full results: Appendix Table~\ref{tab:component_ablation}.}
    \label{fig:component_ablation}
\end{figure*}

\subsection{What Drives OpenSim? Component Ablations}
\label{sec:ablation}

\textbf{Setting: Component Ablations.}
We ablate two objectives while keeping participation modeling and the remaining simulation pipeline unchanged:
\textsc{w/o ST Objective} removes the state-transition objective, while
\textsc{w/o Future Constraint} removes the future-constrained policy objective.
Figure~\ref{fig:component_ablation} reports results across three backbones.

\textbf{Fact E: State-transition supervision anchors long-horizon collective dynamics.}
Removing the state-transition objective causes the largest degradation across all three backbones.
Average KL increases from $1.0983$ to $3.0043$ ($+173.5\%$), Wasserstein distance from $0.1809$ to $0.2858$ ($+58.0\%$), and DTW from $0.1204$ to $0.2517$ ($+109.1\%$).
Average Micro-F1 falls from $0.6826$ to $0.5385$, indicating that the degradation extends across distributional and predictive metrics.
These results show that the mean-field state transition provides the primary cross-step structure for maintaining coherent collective evolution, linking successive population states throughout the simulated trajectory.

\textbf{Fact F: Future constraints improve trajectory fidelity beyond local likelihood.}
Removing the future constraint increases average KL to $1.6877$ ($+53.7\%$) and DTW to $0.1899$ ($+57.8\%$), while reducing average Micro-F1 to $0.5943$.
In contrast, its average NLL improves from $4.4590$ to $4.2833$, with particularly lower NLL on Qwen2.5-1.5B and Qwen3-1.7B.
The divergence between local likelihood and trajectory metrics shows that more likely  communications do not necessarily yield more faithful long-horizon social evolution, motivating the future-constrained communication objective.

\section{Conclusion}

We revisit what should serve as the recursive object of simulation in open online social systems. Rather than carrying the process through persistent synthetic individuals and their latent trajectories, we formulate social simulation around the observable evolution of communication, where participation, collective state, and communicative events are generated from information available at simulation time under an explicit system--environment boundary. This formulation preserves individuals as participants without requiring their continuously evolving psychological states to function as persistent system state. Its structure is also consistent with a communication-centered view of social systems, while remaining a computational formulation derived from the open-population setting itself. We hope this perspective provides a clearer foundation for building and evaluating long-horizon social simulators whose predictions remain empirically addressable and whose modeling responsibilities are explicitly defined.

\section*{AI Use Statement}

We used generative AI tools to assist with language editing, translation, and improving the clarity of the manuscript. LLMs were also used as components of the simulation framework and for semantic annotation in evaluation, as described in the methodology and experimental settings. All AI-assisted content was reviewed by the authors, who take full responsibility for the final manuscript.

\section*{Ethics Statement}

This work studies open-population social simulation for analyzing collective communication and intervention responses. Such techniques could be misused for targeted persuasion or manipulation at scale. Simulated communications and states should not be interpreted as verified psychological attributes of real individuals, and LLM-generated outputs may contain biases or errors. We therefore encourage appropriate ethical review, transparency, and privacy safeguards in real-world applications.

\bibliographystyle{ACM-Reference-Format}
\bibliography{main}

\newpage
\appendix
\onecolumn

\begin{center}
{\Large \textbf{Appendix}}
\end{center}

\noindent\textbf{A.} \hyperref[app:A]{Related Work}\par
\noindent\textbf{B.} \hyperref[app:B]{Theoretical Foundations of the Simulation Target}\par
\noindent\textbf{C.} \hyperref[app:C]{Derivations of the Simulation Method}\par
\noindent\textbf{D.} \hyperref[app:D]{Detailed Experimental Setup}\par
\noindent\textbf{E.} \hyperref[app:E]{Additional Experimental Results and Analysis}\par

\section{Related Work}
\label{app:A}

We review four research strands in their broad historical progression, through a common boundary: \textit{which variables are predicted, which are supplied as external conditions, and at what level the simulated state is defined}.
For the task considered here, future participation and communication are prediction targets, external events are declared conditions, and opinion trajectories are observables derived from communication.
The distinctions below concern mismatches with this task, rather than a blanket rejection of earlier simulation objectives.

\subsection*{Diffusion Forecasting: Predicting Activity without Generating Language}
Hawkes processes model event arrivals through history-dependent intensities~\citep{hawkes1971spectra}, with subsequent work distinguishing endogenous amplification from responses to external shocks~\citep{crane2008robust}.
\textit{SEISMIC} predicts eventual popularity from early reshares, while \textit{DeepCas} learns future cascade growth from observed diffusion structure~\citep{zhao2015seismic,li2017deepcas,wang2026seeingwiderjointspatiotemporal}.
These methods explicitly predict future activity and evaluate it against held-out observations.
Neural point processes extend this line to typed, timestamped event generation, and \textit{HawkesTopic} incorporates textual topics into diffusion modeling~\citep{mei2017neural,he2015hawkestopic,deem-2024}. More recently, \textit{PopSim} uses LLM-based multi-agent simulation to predict content popularity from simulated propagation dynamics~\citep{PopSim-2025-liuwu}.
The prediction target nevertheless \textbf{stops short of the communication process}: event times, cascade sizes, and topic representations do not specify what participants will say or how their expressed opinions will evolve.
\textit{This strand forecasts social activity while leaving its language-mediated development ungenerated}.

\subsection*{LLM Social Sandboxes: Persistent Agents without Participation Forecasting}
LLMs~\citep{LPT-2026-ACL,song2025temporal,song2026hypergraph} introduce natural-language memory, planning, and interaction into agent-based simulation, exemplified by \textit{Generative Agents}~\citep{Stanford-town-2023}.
\textit{OASIS} and \textit{AgentSociety} scale this paradigm to larger social environments~\citep{Oasis-2025,Agentsociety-2025agentsociety}.
\textit{Their central modeling unit is a persistent agent: a population is instantiated, and its members repeatedly interact and update internal states}.
\textit{DynamiX} models evolving social ties and agent roles, while \textit{GASim} accelerates large-scale simulation through hybrid LLM and graph-based updates~\citep{sun2025dynamix,zhou2026gasim}.
\textit{CyberJurors} instead simulates multi-agent deliberation and voting for collective dispute resolution~\citep{sun2026cyberjurors}.
Other work examines agents' social cognition, cognitive biases, and behavioral validity~\citep{liu2025exploring,liu2026cobra,liu2026validatedbehavioralhypotheseslens}.
Open online events pose a different structural problem.
Participation is sparse and time-varying: accounts enter, disappear, and return, while future contributors need not belong to the initial population.
\textbf{Who appears at each timestep is a prediction target}, not merely a scheduling choice within an existing cast.
\textit{Generating activity among persistent agents leaves event-specific participation forecasting unresolved, including its coupling to future communications and observable opinion trajectories}.

\subsection*{Mean-Field Simulators: Rendering Content on Supplied Participation}
Mean-field LLM simulators pursue quantitative trajectory alignment by coupling individual decisions with evolving population signals.
\textit{MF-LLM} learns a textual mean field, while \textit{M$^2$Sim} introduces latent micro states and long-horizon consistency training~\citep{MF-LLM2025mf,zhang2026coupling}.
Their response-generation loop operates on the active set supplied at each step; M$^2$Sim explicitly takes active-pool sizes and descriptors as inputs.
Under realized-participation conditioning, this interface \textbf{places part of the future answer on the input side}: the simulator generates what the supplied participants say, while their presence and number are already given.
\textit{Agreement with real semantic distributions then measures content generation conditional on participation, not prediction of participation itself.}
Stronger macro--micro coupling cannot remove this dependency.
Without participation prediction, the system remains a content renderer along a supplied trajectory rather than a joint forecaster of activity and communication.

\subsection*{Intervention-Aware Simulation: Conditioning vs. Endogenizing External Conditions}
Recent social simulators increasingly incorporate interventions into computational experiments. \textit{Social Simulacra} treats interventions as experimenter-specified conditions~\citep{park2022social}, whereas \textit{PolicySim} and \textit{IntervenSim} endogenize intervention selection within the simulated dynamics~\citep{policysim-WWW2026,zhang2026intervensim}. Such formulations support intervention optimization, but differ from conditional social forecasting when the external-event stream is specified: \textbf{internally selecting or modifying an intervention changes the prediction target}, replacing the declared condition with a generated scenario. Moreover, aggregate or group-level responses do not generate the underlying account-level participation and communication. \textit{Conditional social simulation therefore requires both adherence to specified external events and generation of the communications through which collective dynamics emerge}.

Across these strands, language is omitted from the forecast, participation is confined to an instantiated cast, future structure is supplied as input, or external conditions are generated inside an aggregated simulator.
The missing formulation follows a single boundary: \textbf{starting from observed history, condition on external events, predict participation and communication, and evaluate their induced observable trajectories}.

\section{Theoretical Foundations of the Simulation Target}
\label{app:B}
\label{app:motivation}

This section formalizes two requirements underlying the communication-centered formulation: the simulation target must be grounded in observable records, and participation must remain part of the predicted process.

\subsection{An Identification Limit for Unobserved Individual Dynamics}
\label{app:latent_identification}

Let $O$ collect the available platform records.
Consider two latent-process models $P_0$ and $P_1$, with observable laws $P_0^O$ and $P_1^O$.
Let $\tau_j=\tau(P_j)$ denote a scalar property of the hidden individual dynamics, such as the probability of a state change during inactivity, and write $\Delta=|\tau_1-\tau_0|$.

\begin{proposition}[Observation-limited identification]
\label{prop:latent_identification_bound}
For any estimator $\widehat{\tau}(O)$ based only on observable records,
\begin{equation}
\inf_{\widehat{\tau}}
\max_{j\in\{0,1\}}
\mathbb E_{P_j^O}
\left[
\bigl(\widehat{\tau}(O)-\tau_j\bigr)^2
\right]
\ge
\frac{\Delta^2}{4}
\left(
1-\operatorname{TV}(P_0^O,P_1^O)
\right),
\label{eq:latent_identification_bound}
\end{equation}
where $\operatorname{TV}$ denotes total variation distance.
In particular, observationally equivalent models with different latent dynamics imply an irreducible worst-case error of at least $\Delta^2/4$.
\end{proposition}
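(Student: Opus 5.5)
This is Le Cam's two-point method: we reduce estimation to testing and then lower-bound the testing error through the overlap of the two observable laws.

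\emph{Step 1 (maximum to average).} Fix any estimator $\widehat{\tau}(O)$. Since a maximum dominates an average,
\[
\max_j \mathbb E_{P_j^O}\bigl[(\widehat\tau-\tau_j)^2\bigr] \ge \tfrac12\sum_{j}\mathbb E_{P_j^O}\bigl[(\widehat\tau-\tau_j)^2\bigr].
\]

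\emph{Step 2 (pointwise separation).} Let $\mu$ dominate both $P_0^O$ and $P_1^O$, for instance their sum, and write $p_0,p_1$ for the densities. Each expectation is the integral of $(\widehat\tau(o)-\tau_j)^2$ against $p_j\,d\mu$. Bounding $p_j\ge \min(p_0,p_1)$ gives
\[
\sum_j \mathbb E_{P_j^O}\bigl[(\widehat\tau-\tau_j)^2\bigr]\ge \int \bigl[(\widehat\tau-\tau_0)^2+(\widehat\tau-\tau_1)^2\bigr]\min(p_0,p_1)\,d\mu .
\]
For any real $x$, the elementary inequality $(x-a)^2+(x-b)^2\ge (a-b)^2/2$ holds, with equality at the midpoint. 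Hence the integrand is at least $\tfrac{\Delta^2}{2}\min(p_0,p_1)$.

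\emph{Step 3 (overlap identity).} Use the standard identity $\int\min(p_0,p_1)\,d\mu = 1-\operatorname{TV}(P_0^O,P_1^O)$. Chaining the three steps gives
\[
\max_j \mathbb E_{P_j^O}\bigl[(\widehat\tau-\tau_j)^2\bigr]\ge \tfrac12\cdot\tfrac{\Delta^2}{2}\,(1-\operatorname{TV}) = \tfrac{\Delta^2}{4}\,(1-\operatorname{TV}).
\]
The bound holds for every $\widehat\tau$, so it also holds for the infimum, which is \eqref{eq:latent_identification_bound}.

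\emph{Corollary.} If $P_0^O=P_1^O$, then $\operatorname{TV}=0$ and the bound reduces to $\Delta^2/4$.

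\emph{Where care is needed.} No step is a real obstacle; the main care is the constants. One must use the midpoint inequality with the factor $1/2$ and the overlap-integral form of the total variation identity. The alternative route, bounding a test error by $(1-\operatorname{TV})/2$ and then using Markov's inequality, would lose an extra factor. One should also note that measurability of $\widehat\tau$ is the only assumption needed, and that $\tau_j$ enters only as a constant, since it depends on the model and not on $O$.
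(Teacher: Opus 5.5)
Your proposal is correct and matches the paper's proof essentially line for line: you bound each density below by $\min\{p_0,p_1\}$, apply the midpoint identity to get $\Delta^2/2$, use $\int\min\{p_0,p_1\}\,d\mu = 1-\operatorname{TV}(P_0^O,P_1^O)$, and pass from the sum of the two risks to their maximum. Your side remark that reducing to a test would give only $\tfrac{\Delta^2}{8}(1-\operatorname{TV})$ is also accurate.
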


\begin{proof}
Let $p_0$ and $p_1$ be densities of the observable laws with respect to a common dominating measure $\mu$.
For an arbitrary estimator, define
\[
R_j
=
\mathbb E_{P_j^O}
\left[
\bigl(\widehat{\tau}(O)-\tau_j\bigr)^2
\right].
\]
Then
\begin{align}
R_0+R_1
&=
\int
\left[
p_0(o)(\widehat{\tau}(o)-\tau_0)^2
+
p_1(o)(\widehat{\tau}(o)-\tau_1)^2
\right]d\mu(o)
\nonumber\\
&\ge
\int
\min\{p_0(o),p_1(o)\}
\left[
(\widehat{\tau}(o)-\tau_0)^2
+
(\widehat{\tau}(o)-\tau_1)^2
\right]d\mu(o)
\nonumber\\
&\ge
\frac{\Delta^2}{2}
\int \min\{p_0(o),p_1(o)\}\,d\mu(o)
\nonumber\\
&=
\frac{\Delta^2}{2}
\left(
1-\operatorname{TV}(P_0^O,P_1^O)
\right).
\label{eq:latent_risk_derivation}
\end{align}
The second inequality follows from
\[
(a-\tau_0)^2+(a-\tau_1)^2
=
2\left(a-\frac{\tau_0+\tau_1}{2}\right)^2
+
\frac{\Delta^2}{2}.
\]
Using $\max\{R_0,R_1\}\ge(R_0+R_1)/2$ and taking the infimum over estimators proves Eq.~\eqref{eq:latent_identification_bound}.
\end{proof}

Sparse participation admits precisely this ambiguity.
Suppose an account is observed only at steps $0$ and $2$, with observations revealing its state at those steps.
The hidden paths $(0,0,0)$ and $(0,1,0)$ produce identical observable records but imply different values of
$\tau(P)=P(Z_1\neq Z_0)$.
Here $\Delta=1$ and $P_0^O=P_1^O$, giving a lower bound of $1/4$.

\textbf{Increasing simulator capacity cannot resolve a distinction absent from the observations.}
The bound concerns recovery of hidden individual dynamics, not prediction of subsequent communication.
This motivates defining the empirical target through observable events, while using participant-level latent representations as conditional computational features rather than identified psychological trajectories.

\subsection{The Prediction Burden Removed by Participation Conditioning}
\label{app:participation_prediction}

Fix a declared external stream $e_{1:T}$ and a common initial history.
Let $P^e$ and $Q^e$ denote the reference and simulator laws of the observable rollout.
At each step, write
\[
X_t=(\mathcal H_{t-1},e_t),
\qquad
S_t=\mathcal U_t,
\]
and let $C_t$ collect the participant-indexed actions and contents.
Thus $\mathcal Y_t$ is represented by $(S_t,C_t)$.
Both processes use the same deterministic update of the complete observable history.

Define the content-only discrepancy on reference histories as
\begin{equation}
\mathcal D_{\mathrm{comm}}
=
\sum_{t=1}^{T}
\mathbb E_{P^e}
\left[
\operatorname{KL}
\left(
p(C_t\mid S_t,X_t)
\,\|\,
q(C_t\mid S_t,X_t)
\right)
\right].
\label{eq:content_only_discrepancy}
\end{equation}

\begin{proposition}[Sequential participation--communication decomposition]
\label{prop:sequential_prediction_decomposition}
Assume $P^e\ll Q^e$ and finite divergences.
The full rollout discrepancy satisfies
\begin{align}
\operatorname{KL}(P^e\|Q^e)
&=
\mathcal D_{\mathrm{part}}
+
\mathcal D_{\mathrm{comm}},
\label{eq:rollout_discrepancy_decomposition}\\
\mathcal D_{\mathrm{part}}
&=
\sum_{t=1}^{T}
\mathbb E_{P^e}
\left[
\operatorname{KL}
\left(
p(S_t\mid X_t)
\,\|\,
q(S_t\mid X_t)
\right)
\right].
\label{eq:participation_discrepancy}
\end{align}
Consequently,
$\operatorname{KL}(P^e\|Q^e)\ge\mathcal D_{\mathrm{comm}}$,
with equality exactly when the participation kernels agree on reference histories almost surely.
\end{proposition}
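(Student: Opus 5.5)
The argument is the chain rule for KL divergence applied twice. First across time steps, then within each step, splitting $\mathcal Y_t=(S_t,C_t)$ into participation followed by content.

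\textbf{Factorization of both laws.} Both processes share the initial history and use the same deterministic update $\mathcal H_t=F(\mathcal H_{t-1},\mathcal Y_t,e_t)$. Hence the observable rollout is determined by the sequence $(S_1,C_1,\dots,S_T,C_T)$, and each law factorizes sequentially:
\[
P^e=\prod_{t=1}^{T} p(S_t\mid X_t)\,p(C_t\mid S_t,X_t),\qquad
Q^e=\prod_{t=1}^{T} q(S_t\mid X_t)\,q(C_t\mid S_t,X_t).
\]
Here $X_t$ is a measurable function of the past, because the history is a deterministic function of earlier events and the declared $e_{1:T}$. The point to state carefully is that the history map is a deterministic function of the generated events. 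If $F$ were not injective, one would pass to the law of the event sequence, but the statement's phrase ``complete observable history'' lets us take it as injective. Then the KL divergence between the rollout laws equals the KL divergence between the event-sequence laws.

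\textbf{Chain rule.} Since $P^e\ll Q^e$, the log-likelihood ratio is a telescoping sum:
\[
\log\frac{dP^e}{dQ^e}=\sum_{t=1}^{T}\Big[\log\frac{p(S_t\mid X_t)}{q(S_t\mid X_t)}+\log\frac{p(C_t\mid S_t,X_t)}{q(C_t\mid S_t,X_t)}\Big].
\]
Take the expectation under $P^e$ and use the tower property, conditioning the $t$-th participation term on $X_t$ and the content term on $(S_t,X_t)$. Each term becomes the expected conditional KL divergence on reference histories, which yields Eq.~\eqref{eq:rollout_discrepancy_decomposition} with $\mathcal D_{\mathrm{part}}$ as in Eq.~\eqref{eq:participation_discrepancy}.

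The main technical obstacle is justifying the interchange of expectation and sum. The individual log-ratio terms need not be integrable a priori. I would use the finiteness assumption on the divergences: each conditional KL is nonnegative, so the tower-property computation can be done on nonnegative parts, or via the standard chain-rule theorem for KL with regular conditional distributions. This avoids any $\infty-\infty$ ambiguity.

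\textbf{Consequence.} Since $\mathcal D_{\mathrm{part}}\ge 0$, we get $\operatorname{KL}(P^e\|Q^e)\ge\mathcal D_{\mathrm{comm}}$. Equality holds iff $\mathcal D_{\mathrm{part}}=0$. Because $\mathcal D_{\mathrm{part}}$ is a finite sum of expectations of nonnegative terms, this happens iff each $\operatorname{KL}(p(S_t\mid X_t)\|q(S_t\mid X_t))=0$ for $P^e$-a.e.\ $X_t$. By definiteness of KL, that is equivalent to the participation kernels agreeing on reference histories almost surely.
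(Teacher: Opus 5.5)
Your proposal is correct and follows the paper's proof essentially step for step. The chain rule splits the log-likelihood ratio over steps and then within each step into participation and content terms, the tower property turns each into an expected conditional KL, and nonnegativity and definiteness of KL give the inequality and its equality case; your remarks on injectivity of the history map (so that conditioning on $X_t$ matches conditioning on the full past) and on integrability of the log-ratio terms supply rigor the paper leaves implicit.
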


\begin{proof}
Applying the probability chain rule over time and within each step gives
\begin{align}
\log
\frac{P^e(\mathcal Y_{1:T})}
     {Q^e(\mathcal Y_{1:T})}
&=
\sum_{t=1}^{T}
\log
\frac{p(S_t,C_t\mid X_t)}
     {q(S_t,C_t\mid X_t)}
\nonumber\\
&=
\sum_{t=1}^{T}
\left[
\log\frac{p(S_t\mid X_t)}{q(S_t\mid X_t)}
+
\log\frac{p(C_t\mid S_t,X_t)}
         {q(C_t\mid S_t,X_t)}
\right].
\end{align}
Taking expectation under $P^e$ and conditioning on $X_t$ for the first term and $(S_t,X_t)$ for the second yields
\begin{align}
\operatorname{KL}(P^e\|Q^e)
&=
\sum_{t=1}^{T}
\mathbb E_{P^e}
\left[
\operatorname{KL}
\left(
p(S_t\mid X_t)\,\|\,q(S_t\mid X_t)
\right)
\right]
\nonumber\\
&\quad+
\sum_{t=1}^{T}
\mathbb E_{P^e}
\left[
\operatorname{KL}
\left(
p(C_t\mid S_t,X_t)\,\|\,q(C_t\mid S_t,X_t)
\right)
\right].
\end{align}
Non-negativity of conditional KL gives the inequality.
Equality holds if and only if every participation divergence vanishes almost surely.
\end{proof}

\textbf{Conditioning content evaluation on realized participants leaves an entire component of process prediction untested.}
Even when $\mathcal D_{\mathrm{comm}}=0$, the full rollout discrepancy can remain positive through $\mathcal D_{\mathrm{part}}$.
The decomposition concerns the joint event-process law, rather than the KL between aggregate attitude distributions used in empirical evaluation.


\section{Derivations of the Simulation Method}
\label{app:C}

\subsection{Participation and the Observable Transition Kernel}
\label{app:participation_kernel}

Write $X_t=(\mathcal H_{t-1},\mathcal E_t)$ and
$M=|\mathcal U|$.
For a feasible predicted count $n=\hat N_t\in\{0,\ldots,M\}$, define positive activation weights
\[
w_i=q_\psi(u_i\mid p_i,X_t).
\]
Sequential sampling without replacement induces the ordered activation law
\begin{equation}
Q_\psi^{\mathrm{ord}}(u_{i_1},\ldots,u_{i_n}\mid n,X_t)
=
\prod_{j=1}^{n}
\frac{w_{i_j}}
{\sum_{\ell\notin\{i_1,\ldots,i_{j-1}\}}w_\ell}.
\label{eq:app_ordered_activation}
\end{equation}
The probability of an unordered participant set $S$ is
\begin{equation}
Q_\psi(S\mid n,X_t)
=
\sum_{\sigma\in\operatorname{Perm}(S)}
Q_\psi^{\mathrm{ord}}(\sigma\mid n,X_t).
\label{eq:app_set_activation}
\end{equation}

Given $S$, the state model produces
$(\hat m_t,\{h_{t,j}\})=f_\eta(X_t,P_t)$.
Let
\[
x_{t,j}
=
(X_t,p_{u_{t,j}},h_{t,j},\hat m_t)
\]
be the communication context, and let
$\bar\pi_{\phi,\gamma}$ denote the distribution induced by candidate generation and selection.
With independent candidate sets across activated participants, the complete step kernel is
\begin{equation}
Q_\theta(n,S,C_t\mid X_t)
=
Q_\omega(n\mid X_t)\,
Q_\psi(S\mid n,X_t)
\prod_{j=1}^{n}
\bar\pi_{\phi,\gamma}(a_{t,j},c_{t,j}\mid x_{t,j}),
\label{eq:app_step_kernel}
\end{equation}
where the point forecast corresponds to
$Q_\omega(n\mid X_t)=\mathbf 1\{n=f_\omega(X_t)\}$.

\begin{proposition}[Normalized open-population transition]
\label{prop:app_normalized_transition}
If $Q_\omega$ is supported on $\{0,\ldots,M\}$ and each selected communication law is normalized, Eq.~\eqref{eq:app_step_kernel} defines a normalized transition with exactly $n$ distinct active accounts.
\end{proposition}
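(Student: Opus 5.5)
The plan is to sum the step kernel in Eq.~\eqref{eq:app_step_kernel} over its three coordinates from the inside out. The innermost coordinate is the content $C_t$. Next comes the unordered set $S$ at fixed $n$, and last the count $n$. Nonnegativity is immediate because every factor is a probability or a ratio of positive weights. So the work lies in showing that each marginal sums to one and that the support consists of sets of exactly $n$ distinct accounts.

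\textbf{Content coordinate.} First, fix $(n,S)$ and an ordering $u_{t,1},\ldots,u_{t,n}$ of $S$. The contexts $x_{t,j}$ are deterministic functions of $(X_t,S)$ via $f_\eta$, and each $\bar\pi_{\phi,\gamma}(\cdot\mid x_{t,j})$ is normalized by hypothesis. Summing the product over the product space of action--content pairs therefore factorizes into a product of $n$ ones. Hence $\sum_{C_t}\prod_j\bar\pi_{\phi,\gamma}(a_{t,j},c_{t,j}\mid x_{t,j})=1$. This is a Fubini/product-measure step. If the content space is continuous, the sums are read as integrals.

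\textbf{Activation coordinate (main step).} Next, show that $Q_\psi(\cdot\mid n,X_t)$ is a probability distribution on the $n$-subsets of $\mathcal U$. For $n\le M$, at each stage $j\le n$ the remaining pool contains $M-j+1\ge1$ accounts with positive weights. The denominator $\sum_{\ell\notin\{i_1,\ldots,i_{j-1}\}}w_\ell$ is therefore strictly positive, and Eq.~\eqref{eq:app_ordered_activation} is well defined. Summing the $j$-th factor over the admissible $i_j\notin\{i_1,\ldots,i_{j-1}\}$ gives exactly one. Peeling off $j=n,n-1,\ldots,1$ by induction, the ordered law sums to one over injective sequences of length $n$.

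Injectivity is built into the sampling, since indices already drawn are excluded. So every sequence in the support consists of $n$ distinct accounts. The map from injective sequences to their underlying sets partitions the sequences into the fibers $\operatorname{Perm}(S)$ with $|S|=n$. Eq.~\eqref{eq:app_set_activation} sums over exactly these fibers, so $\sum_{|S|=n}Q_\psi(S\mid n,X_t)=1$ and $Q_\psi(S\mid n,X_t)=0$ unless $|S|=n$.

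One point needs care: the product in Eq.~\eqref{eq:app_step_kernel} is indexed by an ordering of $S$. Because the contents are participant-indexed and the product is symmetric, its value does not depend on the chosen ordering, so the kernel is well defined on sets. I expect this well-definedness check, together with the positivity of the denominators when $n\le M$, to be the only delicate part.

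\textbf{Count coordinate.} Finally, sum over $n$. Since $Q_\omega$ is supported on $\{0,\ldots,M\}$, every $n$ with positive mass is feasible, so the previous step applies. The inner sums each equal one, and the total is therefore $\sum_{n=0}^{M}Q_\omega(n\mid X_t)=1$. The point forecast $\mathbf 1\{n=f_\omega(X_t)\}$ is the special case of a Dirac mass on a feasible $n$. The case $n=0$ is trivial: the empty product and the empty sequence each receive mass one. Combining the three steps, the kernel is normalized and charges only outcomes with exactly $n$ distinct active accounts.
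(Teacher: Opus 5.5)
Your proposal is correct and follows essentially the same route as the paper. Like the paper, you sum out the without-replacement draws from the last one backward, group injective sequences into permutation fibers to obtain a law on $n$-subsets, sum out $C_t$ using normalization of the communication laws, and then sum $Q_\omega$ over $\{0,\ldots,M\}$, with the empty product covering $n=0$. Your additional checks (strictly positive denominators for $n\le M$, and order-independence of the product, which holds because the latent representations are participant-indexed) spell out details the paper's proof leaves implicit.
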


\begin{proof}
At each activation draw, the conditional probabilities sum to one over the remaining accounts.
Successively summing Eq.~\eqref{eq:app_ordered_activation} over its final, penultimate, and preceding draws therefore gives one.
Grouping permutations by their underlying sets preserves normalization.
Consequently,
\begin{align}
\sum_{n=0}^{M}\sum_{S:|S|=n}\sum_{C_t}
Q_\theta(n,S,C_t\mid X_t)
&=
\sum_{n=0}^{M}Q_\omega(n\mid X_t)
\sum_{S:|S|=n}Q_\psi(S\mid n,X_t)
\nonumber\\
&=1.
\end{align}
For $n=0$, the participant set and communication collection are empty and the empty product equals one.
Exclusion operates within each step, allowing an account to reappear at later steps.
\end{proof}

For $n>0$, let $A(y)\in\{e_1,e_2,e_3\}$ encode the expressed attitude of communication $y$.
The realized state update is
\begin{equation}
m_t=\frac{1}{n}\sum_{j=1}^{n}A(y_{t,j}),
\qquad
r_t=g_\xi(r_{t-1},m_t,\mathcal Y_t,\mathcal E_t).
\label{eq:app_realized_state}
\end{equation}
Thus \textbf{$\hat m_t$ conditions generation, while $m_t$ records its realized outcome} and enters the next-step history.

\subsection{Differentiable Future-Constrained Learning}
\label{app:future_constrained_gradient}

For a training communication $y^\star$ and context $x$, let
$v_\phi(x,y^\star)$ collect the policy's teacher-forced prompt and response features.
The trajectory critic acts on these continuous features:
\begin{equation}
\widehat{\mathbf m}(v)
=
C_\gamma(x,v),
\qquad
d_\gamma(v)
=
\sum_{k:b_{t,k}=1}
D\!\left(m_{t+k}^\star,[\widehat{\mathbf m}(v)]_k\right),
\label{eq:app_masked_critic}
\end{equation}
where $b_{t,k}$ indicates an observed future target.
An empty valid-horizon set gives $d_\gamma(v)=0$.

Let $v_+$ be the matched feature vector and
$\{v_r^-\}_{r=1}^{R}$ its masked or shuffled counterparts.
Define
\begin{equation}
\begin{aligned}
B_t&=\mathbf 1\!\left\{\sum_k b_{t,k}>0\right\},\\
a_r&=\delta+d_\gamma(v_+)-d_\gamma(v_r^-),\\
\mathcal L_{\mathrm{sens}}
&=\frac{B_t}{R}\sum_{r=1}^{R}[a_r]_+.
\end{aligned}
\label{eq:app_sensitivity_objective}
\end{equation}
For a valid sample, identical matched and negative predictions incur loss $\delta$.
Zero sensitivity loss requires
\[
d_\gamma(v_r^-)\ge d_\gamma(v_+)+\delta
\qquad\text{for every }r.
\]

The joint objective in Eq.~\eqref{eq:policy_objective} becomes
\begin{equation}
\mathcal L_{\mathrm{joint}}
=
\mathcal L_{\mathrm{comm}}
+
\lambda_{\mathrm{traj}}d_\gamma(v_+)
+
\lambda_{\mathrm{sens}}
\frac{B_t}{R}\sum_{r=1}^{R}[a_r]_+.
\label{eq:app_joint_objective}
\end{equation}

\paragraph{Gradient transmission to the policy.}
Write
$J_+=\partial v_+/\partial\phi$,
$J_r^-=\partial v_r^-/\partial\phi$,
$g_+=\nabla_{v_+}d_\gamma(v_+)$, and
$g_r^-=\nabla_{v_r^-}d_\gamma(v_r^-)$.
Away from hinge points, the chain rule yields
\begin{align}
\nabla_\phi\mathcal L_{\mathrm{joint}}
={}&
\nabla_\phi\mathcal L_{\mathrm{comm}}
+
\lambda_{\mathrm{traj}}J_+^\top g_+
\nonumber\\
&+
\frac{\lambda_{\mathrm{sens}}B_t}{R}
\sum_{r=1}^{R}
\mathbf 1\{a_r>0\}
\left(
J_+^\top g_+-(J_r^-)^\top g_r^-
\right).
\label{eq:app_policy_gradient}
\end{align}
The critic gradient is
\begin{align}
\nabla_\gamma\mathcal L_{\mathrm{joint}}
={}&
\lambda_{\mathrm{traj}}\nabla_\gamma d_\gamma(v_+)
\nonumber\\
&+
\frac{\lambda_{\mathrm{sens}}B_t}{R}
\sum_{r=1}^{R}
\mathbf 1\{a_r>0\}
\left(
\nabla_\gamma d_\gamma(v_+)
-
\nabla_\gamma d_\gamma(v_r^-)
\right).
\label{eq:app_critic_gradient}
\end{align}
At $a_r=0$, a hinge subgradient is used.
Constant masked components have zero Jacobian; shuffled features retain their reindexed gradient paths.

Equations~\eqref{eq:app_policy_gradient}--\eqref{eq:app_critic_gradient} establish the coupling:
\textbf{future-trajectory supervision updates both the critic and the policy through continuous communication features}.
Observed tokens remain fixed during teacher forcing, while their hidden representations depend on the policy parameters.

\begin{table*}[t]
\centering
\small
\setlength{\tabcolsep}{5pt}
\renewcommand{\arraystretch}{1.15}
\caption{\textbf{Information coverage of social-simulation
benchmarks.}
Traffic: event-time aggregate activity;
Micro: individual communications;
Reversal: opinion-reversal events.
The final three columns indicate structured intervention
annotations at each level.
\dcmark: explicitly covered;
\dxmark: not specified as a benchmark component.}
\label{tab:dataset_comparison}
\begin{tabular*}{\textwidth}{
    @{\extracolsep{\fill}}lcccccc@{}
}
\toprule
& \multicolumn{3}{c}{\textbf{Observed social dynamics}}
& \multicolumn{3}{c}{\textbf{Intervention annotations}} \\
\cmidrule(lr){2-4}
\cmidrule(l){5-7}
\textbf{Benchmark}
& \textbf{Traffic}
& \textbf{Micro}
& \textbf{Reversal}
& \textbf{Event}
& \textbf{Platform}
& \textbf{Participant} \\
\midrule
GA-S$^3$~\citep{GAS32025-ga}
& \dcmark & \dxmark & \dxmark
& \dxmark & \dxmark & \dxmark \\

MF-LLM~\citep{MF-LLM2025mf}
& \dxmark & \dcmark & \dxmark
& \dxmark & \dxmark & \dxmark \\

M$^2$Sim~\citep{zhang2026coupling}
& \dxmark & \dcmark & \dcmark
& \dxmark & \dxmark & \dxmark \\
\midrule
\textbf{This work}
& \dcmark & \dcmark & \dcmark
& \dcmark & \dcmark & \dcmark \\
\bottomrule
\end{tabular*}
\end{table*}

\begin{figure*}[t]
    \centering
    \includegraphics[width=\textwidth]{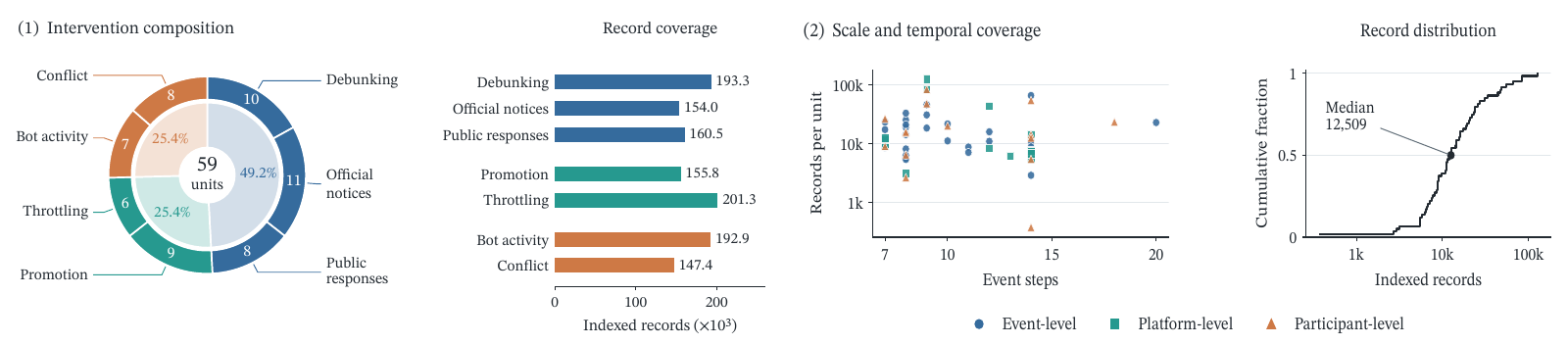}
    \caption{\textbf{Dataset composition and scale.}
    \textbf{(1)} Inner and outer rings show unit counts across
    three intervention levels and seven types, respectively.
    Matching colors identify types in the adjacent bars,
    which report indexed records in thousands.
    Legend percentages denote unit shares by intervention level.
    \textbf{(2)} Each point shows one unit's duration and
    indexed-record count; the empirical cumulative distribution
    summarizes record counts across units (median: 12,509).
    Record-count axes in (2) are logarithmic.}
    \label{fig:dataset_overview}
\end{figure*}

\subsection{Future-Blind Lookahead and Candidate Selection}
\label{app:future_blind_selection}

Let $\mathcal I_t$ contain the frozen models, admissible participant information, simulated history, and environmental inputs declared available at step $t$.
The state-model lookahead produces a reference trajectory
\begin{equation}
\widetilde{\mathbf m}_{t+1:t+K}
=
R_\eta(\mathcal I_t,S_t,\hat m_t;\zeta_t),
\label{eq:app_predicted_reference}
\end{equation}
where $\zeta_t$ is an inference seed.
All quantities used in the lookahead are declared inputs or internally generated predictions.

For a nonempty candidate set
$\mathcal C_{t,j}=\{y^{(1)},\ldots,y^{(J)}\}$,
selection uses
\begin{equation}
\begin{aligned}
s_t(y)
&=
D_K\!\left(
C_\gamma(x_{t,j},v_\phi(x_{t,j},y)),
\widetilde{\mathbf m}_{t+1:t+K}
\right),\\
\widehat y_{t,j}
&\in\arg\min_{y\in\mathcal C_{t,j}}s_t(y),
\end{aligned}
\label{eq:app_candidate_selection}
\end{equation}
with a fixed tie-breaking rule.
Here $D_K$ aggregates trajectory discrepancies across predicted horizons.

\begin{proposition}[Future-blind rollout]
\label{prop:app_future_blind_rollout}
If every inference module depends only on $\mathcal I_t$ and inference randomness, two runs with identical admissible inputs, frozen parameters, and random seeds produce identical rollouts regardless of the withheld future records.
\end{proposition}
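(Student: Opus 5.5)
We argue by induction on the step index $t$, treating a rollout as a deterministic function of its admissible inputs once the random seeds are fixed. Fix two runs sharing the declared external stream $\{\mathcal E_t\}$, the initial history $\mathcal H_0$ and state $(m_0,r_0)$, the participant pool $\mathcal U$ with its system-visible information $\{p_i\}$, the frozen parameters $(\omega,\psi,\eta,\phi,\gamma,\xi)$, and the full seed sequence. This sequence covers the activation sampling seed, the candidate-generation seeds, and the lookahead seeds $\zeta_t$. The two runs may differ arbitrarily in the withheld future records, for instance the realized $\mathcal U_{t'}$, $\mathcal Y_{t'}$ and $\mathbf m_{t+1:t+K}$. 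The claim to prove is that the simulated $(\mathcal H_t,\mathcal Y_t,m_t,r_t)$ coincide in both runs for every $t$. Since the output of Algorithm~\ref{alg:simulation} is $\{\mathcal Y_t\}$, this gives identical rollouts.

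\textbf{Base case.} $\mathcal H_0$ and $(m_0,r_0)$ are declared initial inputs, so they agree.

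\textbf{Inductive step.} Assume the simulated histories $\mathcal H_{t-1}$ agree. Then $\mathcal I_t$, which contains the frozen models, $\{p_i\}$, the simulated $\mathcal H_{t-1}$ and $\mathcal E_t$, is identical in both runs. We follow the modules of one iteration in order, applying at each stage the hypothesis that the module is a measurable function of $\mathcal I_t$, of quantities already shown equal, and of its seed.
\begin{itemize}
\item $\hat N_t=f_\omega(\mathcal H_{t-1},\mathcal E_t)$ agrees in both runs.
\item $\mathcal U_t$ agrees. Sequential sampling without replacement with weights $q_\psi(u_i\mid p_i,X_t)$ is realized as a deterministic map of the uniform seed draws, via Eq.~\eqref{eq:app_ordered_activation}.
\item $(\hat m_t,\{h_{t,j}\})=f_\eta(X_t,P_t)$ agrees, because $P_t$ is determined by $\mathcal U_t$ and $\{p_i\}$.
\item The lookahead reference $\widetilde{\mathbf m}_{t+1:t+K}=R_\eta(\mathcal I_t,S_t,\hat m_t;\zeta_t)$ agrees. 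This is the point where the future-blind hypothesis matters: the argument uses $\widetilde{\mathbf m}$ in place of the realized $\mathbf m_{t+1:t+K}$, which is used only in training losses.
\item The candidate sets $\mathcal C_{t,j}$ agree, since they are produced by $\pi_\phi$ from $x_{t,j}$ and the generation seeds.
\item The scores $s_t(y)$ agree, since Eq.~\eqref{eq:app_candidate_selection} involves only $C_\gamma$, $v_\phi$ and $\widetilde{\mathbf m}$.
\item The argmin $\widehat y_{t,j}$ agrees by the fixed tie-breaking rule.
\end{itemize}
It follows that $\mathcal Y_t$ agrees. Then $m_t$ agrees by Eq.~\eqref{eq:app_realized_state}, with the convention for $n=0$ covered by Proposition~\ref{prop:app_normalized_transition}. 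Next, $r_t=g_\xi(\cdot)$ agrees, and $\mathcal H_t=\textsc{Update}(\cdot)$ agrees because it is deterministic in its inputs. This closes the induction.

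The main obstacle is making precise what ``depends only on $\mathcal I_t$ and inference randomness'' means, so that the step is not circular. We will formalize each module as a fixed measurable map $\Phi(\mathcal I_t,\text{earlier outputs},\xi)$, where $\xi$ is its seed component. Withheld future records are then, by construction, not arguments of any such map. Two subtleties need separate treatment.

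\begin{itemize}
\item \textbf{Tie-breaking and discrete maps.} The tie-breaking rule and the sampling procedures must themselves be deterministic given the seeds. Any stochastic tie-break is absorbed into the seed.
\item \textbf{Frozen parameters.} Training-time use of future trajectories, through $\mathcal L_{\mathrm{traj}}$ and $\mathcal L_{\mathrm{sens}}$, affects only the parameters. These are held fixed and identical across the two runs, so such leakage is excluded.
\end{itemize}

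A remark that the result is pathwise, so that it implies equality in law of the rollouts conditional on the admissible inputs, completes the argument.
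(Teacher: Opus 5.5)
Your proposal is correct and follows essentially the same route as the paper: induction over steps, with identical admissible inputs and seeds giving identical counts, participant sets, state predictions, candidate sets and lookahead references, so that Eq.~\eqref{eq:app_candidate_selection} selects identical communications and the updated histories coincide. Your version makes explicit the per-module seed and tie-breaking bookkeeping that the paper's three-line proof leaves implicit. One small slip: Proposition~\ref{prop:app_normalized_transition} does not fix the $n=0$ convention for $m_t$, but any deterministic rule, such as the algorithm's \textsc{MeanFieldUpdate} step applied to $m_{t-1}$, suffices.
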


\begin{proof}
At the first step, both runs produce identical counts, participant sets, state predictions, candidate sets, and lookahead references.
Equation~\eqref{eq:app_candidate_selection} therefore selects identical communications, yielding identical updated histories.
Induction gives equality at every subsequent step.
\end{proof}

\begin{proposition}[Finite-candidate selection bound]
\label{prop:app_selection_bound}
Let $s_t^\star$ be a target compatibility score on $\mathcal C_{t,j}$.
If
$\sup_{y\in\mathcal C_{t,j}}|s_t(y)-s_t^\star(y)|\le\epsilon$,
then
\begin{equation}
s_t^\star(\widehat y_{t,j})
-
\min_{y\in\mathcal C_{t,j}}s_t^\star(y)
\le 2\epsilon.
\label{eq:app_selection_regret}
\end{equation}
\end{proposition}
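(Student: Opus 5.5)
The argument is the standard regret bound for choosing the minimiser of a uniformly accurate surrogate. Let $y^\star\in\arg\min_{y\in\mathcal C_{t,j}}s_t^\star(y)$ be a minimiser of the target score. It exists because $\mathcal C_{t,j}$ is finite and nonempty. The selected candidate $\widehat y_{t,j}$ minimises the surrogate $s_t$ over the same set, as in Eq.~\eqref{eq:app_candidate_selection}. The tie-breaking rule is irrelevant here, since every element of the argmin satisfies $s_t(\widehat y_{t,j})\le s_t(y)$ for all $y\in\mathcal C_{t,j}$.

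We bound the regret by passing through the surrogate three times:
\begin{equation*}
s_t^\star(\widehat y_{t,j})
\le s_t(\widehat y_{t,j})+\epsilon
\le s_t(y^\star)+\epsilon
\le s_t^\star(y^\star)+2\epsilon .
\end{equation*}
The first and third inequalities follow from the uniform closeness $\sup_{y}|s_t(y)-s_t^\star(y)|\le\epsilon$, applied at $\widehat y_{t,j}$ and at $y^\star$. The middle inequality is the optimality of $\widehat y_{t,j}$ for $s_t$, used with the comparison point $y^\star$. Subtracting $s_t^\star(y^\star)=\min_{y}s_t^\star(y)$ gives Eq.~\eqref{eq:app_selection_regret}.

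No step is a real obstacle. The only care needed is to note that finiteness of $\mathcal C_{t,j}$ guarantees both minima are attained, so the bound holds with $\min$ rather than $\inf$. The hypothesis is also needed only at the two points $\widehat y_{t,j}$ and $y^\star$, not uniformly over the whole set. We may remark that the constant $2\epsilon$ is tight in general: take two candidates whose target scores differ by just under $2\epsilon$, and perturb the surrogate scores by $\pm\epsilon$ so that their order is reversed.
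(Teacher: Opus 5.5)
Your proposal is correct and matches the paper's proof: you pick a target minimiser $y^\star$ and run the same three-step chain $s_t^\star(\widehat y_{t,j})\le s_t(\widehat y_{t,j})+\epsilon\le s_t(y^\star)+\epsilon\le s_t^\star(y^\star)+2\epsilon$. Your added remarks, that the minima are attained, that the hypothesis is needed only at two points, and that the constant $2\epsilon$ is tight, are accurate but not needed for the argument.
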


\begin{proof}
For $y^\star\in\arg\min_{y\in\mathcal C_{t,j}}s_t^\star(y)$,
\[
s_t^\star(\widehat y_{t,j})
\le s_t(\widehat y_{t,j})+\epsilon
\le s_t(y^\star)+\epsilon
\le s_t^\star(y^\star)+2\epsilon.
\]
\end{proof}

The first result establishes the inference information boundary.
The second quantifies selection accuracy within the generated candidate set under bounded scoring error.
Together, they separate \textbf{future-supervised learning} from \textbf{future-blind generation and selection}.

\begin{table*}[t]
\centering
\small
\setlength{\tabcolsep}{4pt}
\renewcommand{\arraystretch}{1.15}
\caption{\textbf{Simulation settings and participant pools.}
Warm-up is measured in interactions and content length in tokens.
Replacement refers to sampling within a simulation step.}
\label{tab:simulation_protocol}

\begin{tabular*}{\textwidth}{
    @{\extracolsep{\fill}}ccccc@{}
}
\toprule
\multicolumn{5}{@{}l}{
    \textbf{Generation and candidate selection}
} \\
\midrule
\textbf{Warm-up}
& \textbf{Seed}
& \textbf{Candidates $J$}
& \textbf{Temperature}
& \textbf{Top-$p$} \\
50 & 46 & 4 & 0.8 & 0.9 \\
\addlinespace[4pt]

\shortstack{\textbf{Repetition}\\\textbf{penalty}}
& \shortstack{\textbf{Content}\\\textbf{limit}}
& \shortstack{\textbf{MC-dropout}\\\textbf{samples}}
& \shortstack{\textbf{Uncertainty}\\\textbf{weight}}
& \shortstack{\textbf{Quality}\\\textbf{weight}} \\
1.05 & 48 & 4 & 0.10 & 0.05 \\
\midrule

\multicolumn{5}{@{}l}{
    \textbf{Participant-pool comparison}
} \\
\midrule
\textbf{Participant pool}
& \textbf{Size}
& \textbf{Sampling unit}
& \textbf{Sampling}
& \textbf{Replacement} \\
Event-specific
& Event-dependent
& User account
& Uniform
& No \\
External (SocioVerse)
& 15,992
& User account
& Uniform
& No \\
\bottomrule
\end{tabular*}
\end{table*}

\section{Detailed Experimental Setup}
\label{app:D}

\subsection{Dataset and Preprocessing}
\label{app:dataset}

The dataset combines \textbf{traffic dynamics, individual
communications, opinion reversals, and timed interventions}
across \textbf{59 intervention-event units}, comprising
1,205,150 indexed interaction records and 606 discrete event
steps.
Each unit links communication texts and participant features
to interaction counts, attitude trajectories, and intervention
timing; different intervention contexts of the same underlying
topic form separate units.
Table~\ref{tab:dataset_comparison} compares benchmark coverage,
and Figure~\ref{fig:dataset_overview} summarizes the dataset's
composition and scale.

\noindent\textbf{Intervention taxonomy.}
Seven intervention types are organized by
\textbf{where they enter the social process}.
\textbf{Event-level} interventions update event information:
\emph{debunking} corrects circulating claims,
\emph{official notices} provide institutional statements,
and \emph{public responses} capture explanations from
involved parties.
\textbf{Platform-level} interventions regulate exposure
through \emph{promotion} or \emph{throttling}.
\textbf{Participant-level} interventions comprise
\emph{bot activity}, introducing automated participation,
and \emph{conflict}, capturing antagonistic interactions.
Their occurrence steps are aligned with the communication
and attitude trajectories.

\noindent\textbf{Preprocessing and supervision.}
Training, validation, and test contain
\textbf{44, 8, and 7 intervention-event units}, respectively.
Training and validation yield 27,757 and 5,270 policy examples;
the test set spans five underlying topics, 283,952 indexed
records, and 68 event steps.
Each example conditions on available event context,
participant features, the preceding mean-field synopsis,
and an activation-local representation.
The target consists of a \texttt{comment} or \texttt{share}
action and its text.
Attitude distributions use the order
$[\mathrm{positive},\mathrm{neutral},\mathrm{negative}]$.
Future supervision comprises cumulative attitude distributions
over the next $K=10$ steps, with unavailable horizons masked.


\subsection{Simulation Protocol}
\label{app:simulation_protocol}

\noindent\textbf{Initialization and rollout.}
Each event retains its post context and begins with a
50-interaction warm-up, followed by participant sampling
and policy generation.
Selected communications update the simulated history
and mean field.
Table~\ref{tab:simulation_protocol} summarizes the settings.

\noindent\textbf{Participant pools.}
OpenSim uses an \emph{event-specific participant pool}
constructed from the corresponding event's observed participants,
whereas OpenSim$^\dagger$ uses a frozen \emph{external participant pool}
of 15,992 real-user profiles from
SocioVerse~\citep{Socioverse2025socioverse}.
Both settings uniformly sample user accounts without replacement
within each step, while allowing users to participate again
in subsequent steps.
Event context and stepwise communication counts are held fixed
in the pool comparison.
Neither setting consumes the observed future participant sequence.

\noindent\textbf{Communication generation.}
At each policy call, four candidate communications are generated.
Selection minimizes masked KL divergence to the state model's
predicted future attitude trajectory, together with
MC-dropout uncertainty and output-quality penalties.

\begin{table*}[t]
\centering
\footnotesize
\setlength{\tabcolsep}{3pt}
\renewcommand{\arraystretch}{1.12}
\caption{\textbf{Key model and training hyperparameters.}
Hyperparameters are shared across backbone configurations unless otherwise specified.
Epoch counts describe the initial training schedule before validation-based checkpoint selection.}
\label{tab:training_hparams}

\begin{tabular*}{\textwidth}{
    @{\extracolsep{\fill}}lclclc@{}
}
\toprule
\multicolumn{2}{c}{\textbf{State model}}
& \multicolumn{2}{c}{\textbf{Communication policy}}
& \multicolumn{2}{c}{\textbf{Trajectory critic}} \\
\cmidrule(lr){1-2}
\cmidrule(lr){3-4}
\cmidrule(lr){5-6}

Hidden dimension & 192
& LoRA rank / alpha & 32 / 32
& Hidden dimension & 256 \\

Layers / heads & 2 / 6
& Sequence length & 704
& Recurrent layers & 1 \\

Latent dimension & 64
& Effective batch size & 32
& Dropout & 0.2 \\

Participants / step & 128
& SFT learning rate & $8\times10^{-6}$
& Future horizon $K$ & 10 \\

Learning rate & $10^{-3}$
& Joint learning rate & $3\times10^{-6}$
& MC-dropout samples & 4 \\

Epoch budget & 12
& SFT epochs & 3
& Warm-up learning rate & $5\times10^{-5}$ \\

Auxiliary horizon & 3
& Initial joint epochs & 2
& Joint learning rate & $3\times10^{-5}$\\

\midrule
\multicolumn{6}{@{}l}{
    \textbf{Loss coefficients and weighting schedule}
} \\
\midrule

Increment weight & 0.35
& Direction weight & 0.15
& Curvature weight & 0.05 \\

MC-consistency weight & 0.01
& Relative sensitivity weight & 0.5
& Sensitivity margin $\delta$ & 0.05 \\

Maximum future weight & 0.15
& Horizon discount & 0.95
& Future-weight ramp & 20\% \\

\bottomrule
\end{tabular*}
\end{table*}

\begin{figure*}[t]
\centering
\includegraphics[width=\linewidth]{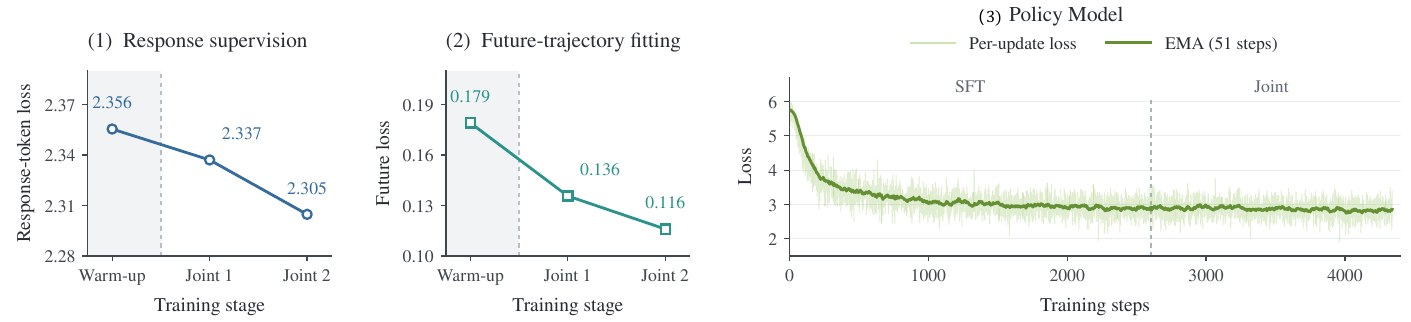}
\caption{\textbf{Training dynamics of Qwen2.5-1.5B.}
\textbf{(1) Response:} epoch-averaged response-token loss.
\textbf{(2) Future:} epoch-averaged future-fitting objective
$\mathcal{L}_{\mathrm{fit}}$.
Both panels cover critic warm-up and two joint-training epochs.
\textbf{(3) Policy:} stage-specific training loss over
2,604 SFT updates and 1,736 joint-training updates.
The light curve shows the mean mini-batch loss per update;
the dark curve shows an exponential moving average
with a span of 51 updates.
The dashed line marks the SFT--joint boundary,
where smoothing restarts.}
\label{fig:training_curves}
\end{figure*}

\subsection{Training Curves and Hyperparameters}
\label{app:training}

\noindent\textbf{Training configuration.}
Communication policies are adapted from Qwen2-1.5B,
Qwen2.5-1.5B/3B/7B, and Qwen3-1.7B using LoRA\citep{li2026lora}.
All backbones follow a three-stage framework:
communication SFT, critic-only warm-up, and joint optimization.
Warm-up freezes the policy and trains the trajectory critic;
joint training updates both the policy adapters and critic.
Checkpoint selection uses validation communication quality
and response-sensitive trajectory prediction.
Additional critic calibration, when required by validation,
keeps the policy frozen.

\noindent\textbf{Optimization.}
Table~\ref{tab:training_hparams} reports the key model
hyperparameters, training schedules, and loss coefficients.
All modules are optimized with AdamW using random seed 42.
The future-objective weight increases linearly during
the initial portion of joint training, while unavailable
future horizons are masked throughout trajectory supervision.

\noindent\textbf{Training dynamics.}
Figure~\ref{fig:training_curves} separates response supervision
and future fitting from the overall policy optimization trace,
showing component losses at the epoch level and policy loss
at the optimizer-update level.

\begin{table*}[t]
\centering
\footnotesize
\begingroup
\definecolor{FullTint}{HTML}{EDE4F7}
\definecolor{AblationTint}{HTML}{F5F5F5}

\caption[Component ablations across backbones.]{%
\textbf{Component ablations across backbones.}
\textsc{w/o ST Objective} removes state-transition supervision, while
\textsc{w/o Future Constraint} removes the future-constrained policy objective.
\textbf{Bold}/\underline{underline} indicate the best/second-best result within each backbone.}
\label{tab:component_ablation}

\scriptsize
\setlength{\tabcolsep}{3.5pt}
\renewcommand{\arraystretch}{0.98}
\setlength{\aboverulesep}{1pt}
\setlength{\belowrulesep}{1pt}

\begin{tabularx}{\textwidth}{
@{}ll*{6}{>{\centering\arraybackslash}X}@{}}
\toprule
\textbf{Model} &
\textbf{Setting} &
\textbf{KL} $\downarrow$ &
\textbf{Wass.} $\downarrow$ &
\textbf{DTW} $\downarrow$ &
\mbox{\textbf{Macro-F1} $\uparrow$} &
\mbox{\textbf{Micro-F1} $\uparrow$} &
\textbf{NLL} $\downarrow$ \\
\midrule

\rowcolor{FullTint}
\cellcolor{white} &
\textbf{OpenSim}
& \textbf{1.1352} & \textbf{0.1922} & \textbf{0.1261}
& \textbf{0.4444} & \textbf{0.6616} & \textbf{4.3377} \\

\rowcolor{AblationTint}
\cellcolor{white} &
\textsc{w/o ST Objective}
& 3.1716 & 0.3074 & 0.2644
& 0.4087 & 0.5079 & 4.4791 \\

\rowcolor{AblationTint}
\cellcolor{white}
\multirow{-3}{*}{\shortstack{Qwen\\2-1.5B}}
& \textsc{w/o Future Constraint}
& \underline{1.6934} & \underline{0.2410} & \underline{0.1914}
& \underline{0.4331} & \underline{0.5597} & \underline{4.4315} \\
\midrule

\rowcolor{FullTint}
\cellcolor{white} &
\textbf{OpenSim}
& \textbf{1.0667} & \textbf{0.1724} & \textbf{0.1252}
& \textbf{0.6097} & \textbf{0.6252} & 4.5980 \\

\rowcolor{AblationTint}
\cellcolor{white} &
\textsc{w/o ST Objective}
& 2.9886 & 0.2725 & 0.2494
& 0.4222 & 0.5299 & \underline{4.4942} \\

\rowcolor{AblationTint}
\cellcolor{white}
\multirow{-3}{*}{\shortstack{Qwen\\2.5-1.5B}}
& \textsc{w/o Future Constraint}
& \underline{1.5839} & \underline{0.1940} & \underline{0.1864}
& \underline{0.5369} & \underline{0.5646} & \textbf{4.3211} \\
\midrule

\rowcolor{FullTint}
\cellcolor{white} &
\textbf{OpenSim}
& \textbf{1.0929} & \textbf{0.1780} & \textbf{0.1098}
& \textbf{0.5445} & \textbf{0.7610} & \underline{4.4414} \\

\rowcolor{AblationTint}
\cellcolor{white} &
\textsc{w/o ST Objective}
& 2.8526 & 0.2775 & 0.2412
& \underline{0.4782} & 0.5776 & 4.5111 \\

\rowcolor{AblationTint}
\cellcolor{white}
\multirow{-3}{*}{\shortstack{Qwen\\3-1.7B}}
& \textsc{w/o Future Constraint}
& \underline{1.7857} & \underline{0.2320} & \underline{0.1919}
& 0.4413 & \underline{0.6587} & \textbf{4.0974} \\
\bottomrule
\end{tabularx}
\endgroup
\end{table*}

\section{Additional Experimental Results and Analysis}
\label{app:E}

\begin{figure*}[t]
    \centering
    \includegraphics[width=\textwidth]{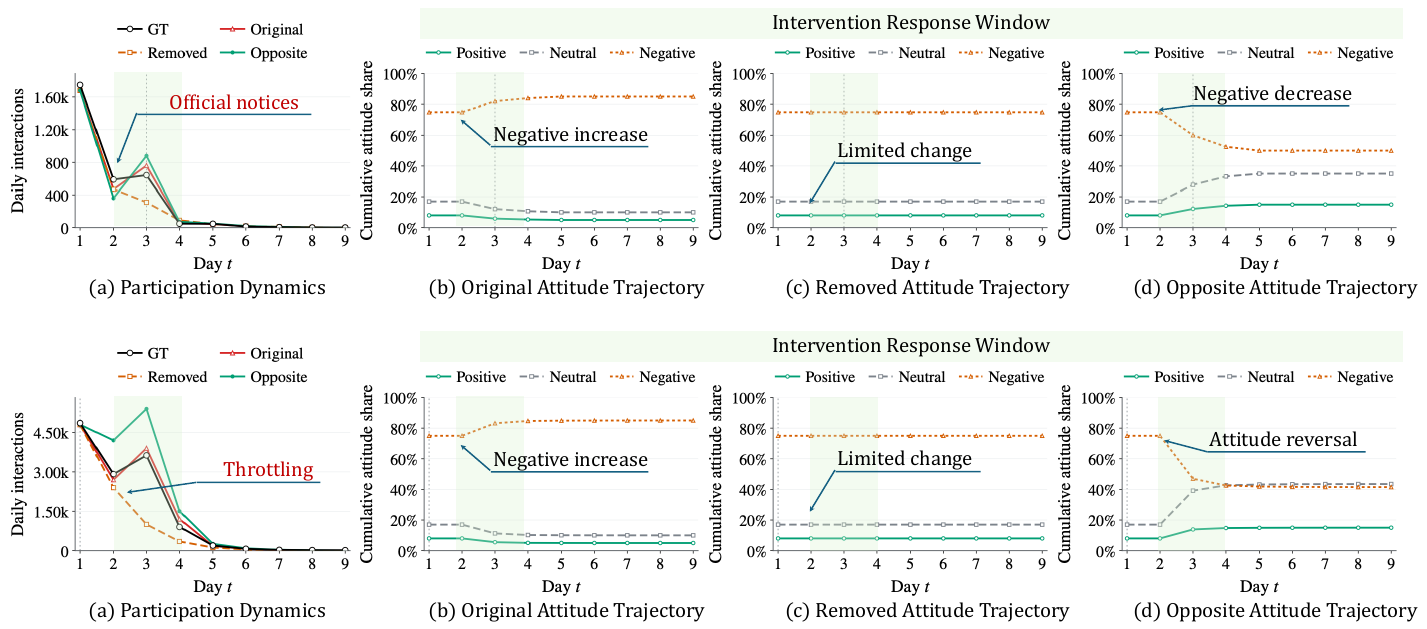}
    \caption{\textbf{Intervention responses in the Xi'an metro case.}
    The top and bottom rows examine official notices and
    throttling, respectively.
    Each row shows participation dynamics and cumulative
    attitude shares under \textsc{Original}, \textsc{Removed},
    and \textsc{Opposite}.
    GT denotes observed participation; shaded regions mark
    intervention response windows.}
    \label{fig:case_xian}
\end{figure*}

\begin{figure*}[t]
    \centering
    \includegraphics[width=\textwidth]{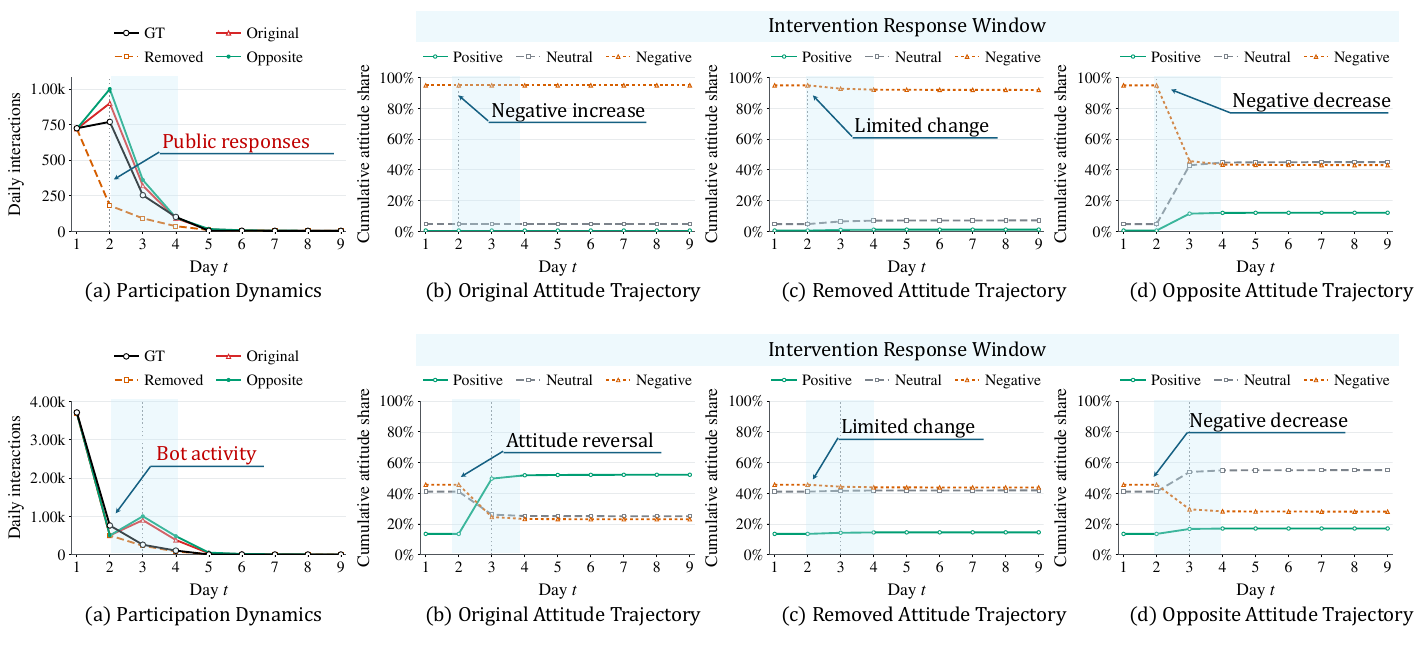}
    \caption{\textbf{Intervention responses in the Guilin tourism case.}
    The top and bottom rows examine public responses and
    bot activity, respectively.
    Each row shows participation dynamics and cumulative
    attitude shares under \textsc{Original}, \textsc{Removed},
    and \textsc{Opposite}.
    GT denotes observed participation; shaded regions mark
    intervention response windows.}
    \label{fig:case_guilin}
\end{figure*}

\subsection{Case Studies of Intervention Responses}
\label{app:intervention_cases}

Figures~\ref{fig:case_xian} and~\ref{fig:case_guilin}
compare participation and cumulative attitude trajectories
under \textsc{Original}, \textsc{Removed}, and \textsc{Opposite}.
The original conditions assess reproduction of observed
attitude trends, removal examines evolution without the
corresponding intervention, and opposite conditions explore
alternative governance responses.

In the Xi'an metro case, the initial response emphasized
the passenger's disruption of public order, while public
criticism focused on the security guard's conduct.
To address this mismatch, the opposite official notice
explicitly acknowledges improper handling, apologizes
to the passenger, and announces investigation,
accountability, and corrective measures.
For throttling, the opposite condition removes additional
exposure restrictions and allows discussion to spread naturally.
These changes test whether addressing public concerns
and maintaining open discussion can alleviate negative attitudes.
The original conditions reproduce the observed persistence
and further increase of negative attitudes.
Removing the interventions weakens the participation response
and leaves attitude shares comparatively stable.
Under the opposite conditions, the negative share declines
and the neutral share increases, with neutrality becoming
slightly dominant when throttling is lifted.
Increased participation therefore need not coincide
with more negative attitudes.

In the Guilin tourism case, a tourist's complaint about
dirty concert seats was followed by blocking by the official
tourism account, extending the controversy from service
quality to complaint handling.
The subsequent official response did not resolve public
dissatisfaction.
The opposite response therefore emphasizes concrete remedies:
unblocking and contacting the complainant, reporting
cleaning inspections and complaint-resolution progress,
and providing a follow-up feedback channel.
This tests whether verifiable corrective measures can
alleviate persistent dissatisfaction.
The original response condition reproduces the observed
dominance of negative attitudes.
Removing the response produces little change, whereas
the opposite response substantially reduces the negative
share, with neutrality becoming the largest category.
For bot activity, we replace positive stance promotion
with neutral factual communication, restricting messages
to verified information and updates on corrective actions.
This comparison examines how the two communication
strategies shape attitude evolution.
The original condition reproduces the observed shift
from negative to positive dominance, while the opposite
condition yields a neutral-dominated trajectory.
Together, these cases demonstrate reproduction of observed
attitude trends and simulation of possible outcomes
under alternative governance responses.

\begin{figure*}[t]
    \centering
    \includegraphics[width=\textwidth]{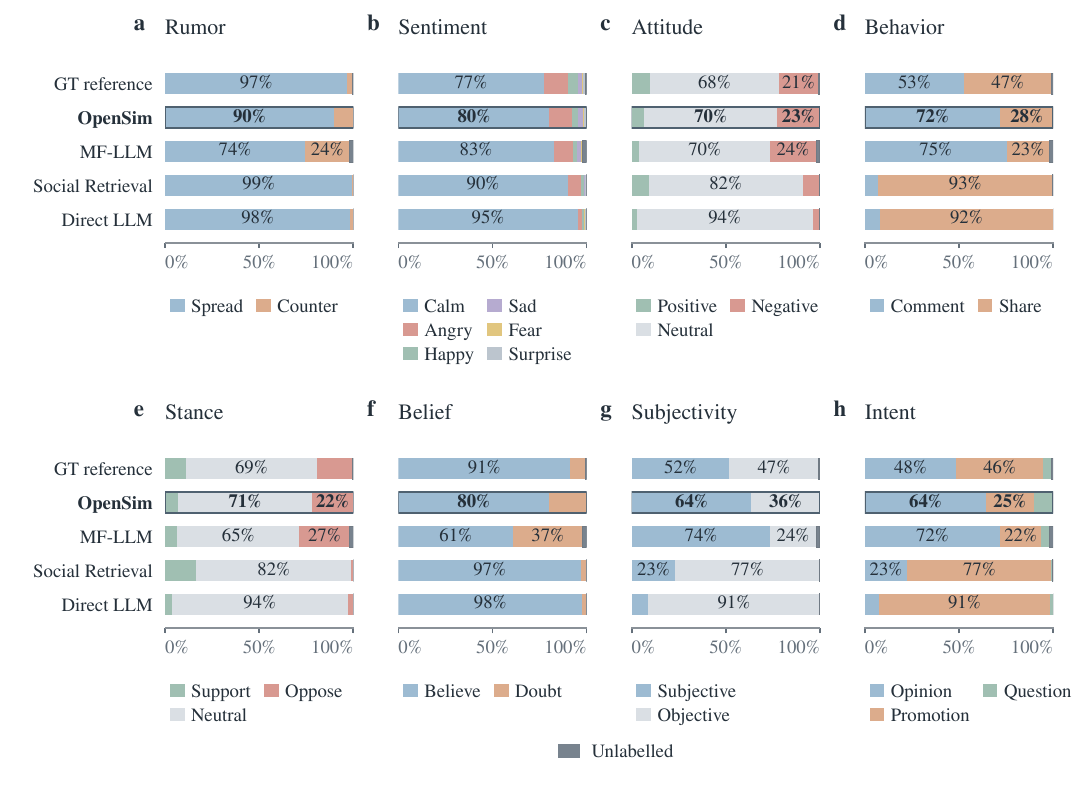}
    \caption{\textbf{Full semantic distributions across eight dimensions.}
    Qwen2-1.5B results pooled over seven test units and
    68 event-day groups, with 10,879 record pairs per method.
    Each bar shows record-weighted category proportions;
    gray denotes unlabelled records.
    GT reference averages the real-label distributions
    across the four evaluation caches.
    Percentage labels are shown for segments of at least 20\%.}
    \label{fig:semantic_distributions}
\end{figure*}

\subsection{Fine-Grained Semantic Distributions}
\label{app:semantic_distributions}

Figure~\ref{fig:semantic_distributions} compares complete
communication distributions across eight semantic dimensions.
\textbf{The clearest differences concern how agents communicate,
what they express, and how strongly their outputs concentrate
on dominant categories.}

\noindent\textbf{Behavior and communicative intent.}
Direct LLM and Social Retrieval allocate approximately
92\% of their outputs to sharing, compared with 46.8\%
in the reference distribution.
Their intent distributions are similarly concentrated
on promotion, at 91.1\% and 76.6\%, respectively,
against a reference share of 46.4\%.
OpenSim produces substantially more comments and opinions,
bringing both distributions closer to the reference.
The discrepancy therefore extends beyond action frequency:
it also changes the semantic composition of the simulated
communications.

\noindent\textbf{Expression style and affective composition.}
OpenSim retains 63.7\% subjective expression, compared with
52.0\% in the reference, whereas Direct LLM and Social Retrieval
produce predominantly objective outputs.
A related concentration appears in emotion and attitude~\citep{zheng2026surfacing}.
The calm category accounts for 95.2\% of Direct LLM outputs
and 90.0\% of Social Retrieval outputs, but 77.1\% of the
reference; OpenSim yields 80.3\%.
Its neutral-attitude share of 70.4\% likewise remains close
to the reference value of 68.2\%.
The stance distribution exhibits the same pattern,
with OpenSim retaining substantial opposing responses
alongside neutral communication.
\textbf{These comparisons reveal differences that matching
the dominant category alone would obscure.}

\noindent\textbf{Rumor and belief.}
OpenSim assigns more probability mass to countering and doubt
than Direct LLM and Social Retrieval.
These categories also exceed their reference proportions,
showing that broader response coverage does not automatically
imply closer distributional alignment.
The complete profiles distinguish this redistribution from
the stronger reference agreement observed in behavior,
expression style, and affective composition.

\end{document}